\documentclass[11pt]{article}
\usepackage{graphicx}
% Required for inserting images
%%% DON'T EDIT ANYTHING HERE (unless you know what you're doing) %%%%

\usepackage{amssymb,amsmath,amsthm, bbm, mathdots}
\usepackage{color, xcolor}
\usepackage{graphicx}
\usepackage{setspace}
\usepackage{xspace}
\usepackage{multirow}
\usepackage{array}
\usepackage[margin = 1in]{geometry}
\usepackage{complexity}

\usepackage{dirtytalk}
\usepackage{tikz}
\usepackage{hyperref}
\usepackage[capitalize]{cleveref}

\usepackage[shortlabels]{enumitem}
\usepackage{sectionbreak}
\sectionbreakmark{\rule{10em}{3pt}}

\hypersetup{colorlinks=true,urlcolor=blue,linkcolor=magenta,citecolor=[rgb]{.42,.56,.14},}

\usepackage{sectsty,indentfirst}

\theoremstyle{plain}
\newtheorem{theorem}{Theorem}
\newtheorem{corollary}[theorem]{Corollary}

\newtheorem{lemma}[theorem]{Lemma}
\newtheorem{claim}[theorem]{Claim}
\newtheorem{fact}[theorem]{Fact}

\newtheorem{obs}[theorem]{Observation}

\newtheorem{definition}[theorem]{Definition}

\theoremstyle{remark}

\theoremstyle{plain}
%
% The following commands set up the lecnum (lecture number)
% counter and make various numbering schemes work relative
% to the lecture number.
%
%% More complexity classes.
\newclass{\DNF}{DNF}
\newclass{\DNFs}{DNFs}
\newclass{\ACzero}{AC^0}
\newclass{\TCzero}{TC^0}

% Famous groups
\renewcommand{\R}{\mathbb{R}} % The Reals
\newcommand{\F}{\mathbb{F}} % Generic Field, Finite Field
 % Complex Numbers
 % Natural Numbers
 % Integers
 % Rational Numbers

% Probabilistic notation
\renewcommand{\Pr}{\mathop{\bf Pr\/}}
\renewcommand{\E}{\mathop{\bf E\/}}
\renewcommand{\D}{\mathop{\bf D\/}}

% Parentheses
\newcommand{\abs}[1]{\left|#1\right|}

% Famous distributions
 %Bernoulli
 %Binomial
 %Gaussian

% Boolean Functions
\newfunc{\MAJ}{MAJ}
\newfunc{\MUX}{MUX}
\newfunc{\NAE}{NAE}
\renewcommand{\IP}{\mathrm{IP}} % Inner-Product
\newfunc{\OR}{OR} %r
\newfunc{\AND}{AND}
\newfunc{\Tribes}{Tribes}
\newfunc{\LocalCorrect}{LocalCorrect}

 % p-random restriction
\newfunc{\sgn}{sgn} % Sign

\newfunc{\spar}{sparsity}
\newfunc{\rank}{rank}
\newfunc{\spn}{span}
\newfunc{\quasipoly}{quasipoly}
\newfunc{\Bias}{Bias}

\newfunc{\DT}{DTdepth} % Decision Tree depth
\newfunc{\DTs}{DTsize} % Decision Tree size
 % Fourier-Weight
 % Influence
 % Normalized Influence
 % Effect
 % Noise Stability
 % Noise Sensitivity

 % nicer-looking than star
 % nicer-looking than l

%\renewcommand{\bar}{\widebar}

\renewcommand{\tilde}{\widetilde}

% The alphabet, in different fonts.

\newcommand{\calV}{\mathcal{V}}

\newcommand{\calX}{\mathcal{X}}
\newcommand{\calY}{\mathcal{Y}}

%%%% OK, YOU MAY EDIT BELOW THIS LINE %%%

% sets

%% Macros for lecture 9 %%
\newfunc{\Parity}{PARITY}

%% Macros for lecture 10 %%

%% Macros for lecture 11 %%

%% Macros for lecture n+1, for any n... %%

%% Macros for lecture 16 %%

\newfunc{\Dict}{Dict}
\newfunc{\Corr}{Corr}
\newfunc{\avg}{avg}
\newfunc{\smooth}{smooth}
\newfunc{\dist}{dist}
 % projection

% Junta Distance Approximation Paper

\newclass{\ETH}{ETH}

% spicyy 

\renewcommand{\epsilon}{\varepsilon}

\newcommand{\rk}{\mathsf{rk}}

 % inner command, used by \rchi

\newcommand{\supp}{\mathrm{supp}}

%\newcommand{\Dfrac}[2]{%
%  \ooalign{%
%    $\genfrac{}{}{1.2pt}0{#1}{#2}$\cr%
%    $\color{white}\genfrac{}{}{0.4pt}0{\phantom{#1}}{\phantom{#2}}$}%
%}

%\tikzstyle{startstop} = [rectangle, rounded corners, minimum width=3cm, minimum height=1cm,text centered, draw=white]

%\tikzstyle{arrow} = [thick,->,>=stealth]

%\title{Deterministic Communication Lower Bounds for Function Compositions}
\title{Lifting for Arbitrary Gadgets}
%\date{December 2024}
\author{Siddharth Iyer\thanks{Supported by NSF award 2131899.} \\ siyer@cs.washington.edu}

\providecommand{\Dfrac}[2]{\begin{array}{c} #1 \\ \hline  \hline  #2\end{array}}

\begin{document}
\maketitle 
\begin{abstract}
    We prove a sensitivity-to-communication lifting theorem for arbitrary gadgets. Given functions $f: \{0,1\}^n\to \{0,1\}$ and $g : \calX\times \calY\to \{0,1\}$, denote $f\circ g(x,y) := f(g(x_1,y_1),\ldots,g(x_n,y_n))$. 
    %We give a lower bound on the deterministic communication complexity of computing $f\circ g(x,y) := f(g(x_1,y_1),\ldots,g(x_n,y_n))$, for a Boolean function $f: \{0,1\}^n\to \{0,1\}$ and arbitrary $g : \calX\times \calY\to \{0,1\}$. 
    We show that for any $f$ with sensitivity $s$ and any $g$,
    \[D(f\circ g) \geq s\cdot \bigg(\frac{\Omega(D(g))}{\log\rk(g)} - \log\rk(g)\bigg),\]
    where $D(\cdot)$ denotes the deterministic communication complexity and $\rk(g)$ is the rank of the matrix associated with $g$.     
    As a corollary, we get that if $D(g)$ is a sufficiently large constant, %result implies that for any function $g$ with $D(g)$ sufficiently large, 
    $D(f\circ g) = \Omega(\min\{s,d\}\cdot \sqrt{D(g)})$, where $s$ and $d$  denote the sensitivity and degree of $f$. In particular, computing the OR of $n$ copies of $g$  requires $\Omega(n\cdot\sqrt{D(g)})$ bits.
\end{abstract}

\section{Introduction}
Given two functions $f$ and $g$, how much harder is it to compute their composition than it is to compute each of the functions? 
In this work, we study this question in the model of deterministic communication complexity (see \cite{RY,KN} for a detailed reference on this topic). 
For a function $g:\calX\times \calY \to \{0,1\}$, let $D(g)$ denote the deterministic communication complexity of $g$. Given a function $g(x,y)$ and a function $f:\{0,1\}^n\to \{0,1\}$ we will be interested in the deterministic communication complexity of 
\[f\circ g(x,y) := f(g(x_1,y_1),\ldots,g(x_n,y_n)).\]

Besides being a natural problem to study, understanding function compositions especially in the context of deterministic communication, has connections to circuit lower bounds \cite{KRW}. 
%Karchmer and Wigderson \cite{KW} discovered that the depth of the shallowest circuit computing a given function $g$ is exactly equal to the deterministic communication complexity of an associated relation, now known as the Karchmer-Wigderson game for $g$. Building on this, 
%Karchmer, Raz and Wigderson \cite{KRW} observed that proving lower bounds on the communication complexity of the compositions of certain relations is an avenue to showing the existence of functions computable by polynomial-size circuits but not by logarithmic-depth ones. 
We begin by recalling some prior work on related questions.

\subparagraph*{Connections to Direct Sums and XOR Lemmas.} Feder, Kushilevitz, Naor and Nisan \cite{FKNN} were the first to consider the direct sum problem for deterministic communication, which asks for lower bounds on the communication required to compute $n$ copies of $g$, denoted $g^n(x,y) := g(x_1,y_1),\ldots,g(x_n,y_n)$. Before we state their result, we define the cover number, $C(g)$, to be the smallest number of rectangles needed to cover $\calX\times \calY$ such that each rectangle is constant for $g$. Feder et al. showed the following.

\begin{theorem}[\cite{FKNN}]\label{thm:direct-sum}
    $D(g^n) \geq \log C(g^n) \geq n\cdot (\sqrt{D(g)} - \log\log(|\calX|\cdot|\calY|) - 1)$.
\end{theorem}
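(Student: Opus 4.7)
The first inequality $D(g^n) \geq \log C(g^n)$ is immediate: any $c$-bit deterministic protocol for $g^n$ partitions its input space into at most $2^c$ monochromatic rectangles, which is in particular a cover. I will focus on the second inequality.

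The plan is a two-step reduction that culminates in the Aho--Ullman--Yannakakis (AUY) theorem, which states that a cover of a function $h$ by $K$ rectangles yields a deterministic protocol for $h$ of cost $O(\log^2 K)$. Writing $C := C(g^n)$, I would first establish a \emph{cover-compression lemma}: from a cover of $g^n$ of size $C$, extract a cover of $g$ of size $K$ with
\[\log K \;\leq\; \frac{\log C}{n} + O\bigl(\log\log(|\calX|\cdot|\calY|)\bigr).\]
Plugging into AUY yields $D(g) \leq O\!\left(\bigl(\tfrac{\log C}{n} + \log\log(|\calX|\cdot|\calY|)\bigr)^{2}\right)$, and taking square roots and rearranging gives exactly $\log C(g^n) \geq n\bigl(\sqrt{D(g)} - \log\log(|\calX|\cdot|\calY|) - 1\bigr)$, with the AUY constants absorbed into the $-1$ term.

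To prove the cover-compression lemma I would try an averaging argument. For each coordinate $i \in [n]$ and each fixing $(\alpha,\beta) \in \calX^{n-1}\times \calY^{n-1}$ of the other coordinates, restricting the cover of $g^n$ to inputs that agree with $(\alpha,\beta)$ outside coordinate $i$ induces a cover of $g$ on the $i$-th coordinate. The intuition is amortization: a cover of $g^n$ of size $C$ carries on average $\log C/n$ bits of information per coordinate, so some coordinate $i$ and some restriction $(\alpha,\beta)$ should induce a cover of $g$ of size only $\approx 2^{\log C/n}$; the $\log\log$ overhead reflects the cost of naming $i$ and of a succinct description of $(\alpha,\beta)$. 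Carrying this out rigorously likely requires a double-counting argument over (coordinate, restriction, rectangle) triples followed by a pigeonhole step to locate a favorable slice.

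The main obstacle is the cover-compression step itself. A naive projection of each $g^n$-rectangle to one coordinate can yield up to $C$ distinct projected rectangles and gives no useful saving. The technical heart of the argument is to exploit the product structure so that the effective number of rectangles needed to cover one coordinate's slice is $2^{\log C/n}$ rather than $C$, while keeping the correction at $\log\log$ rather than $\log$ of the alphabet sizes. I expect this to require a careful combinatorial or entropy-style argument that balances information per coordinate against the overhead of specifying the slice.
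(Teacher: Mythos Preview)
First, a framing point: the paper does not prove \Cref{thm:direct-sum}. It is quoted from \cite{FKNN} as background, so there is no in-paper proof to compare against. I will therefore assess your outline on its own, and contrast it with the actual argument of Feder--Kushilevitz--Naor--Nisan (which, incidentally, has the same skeleton as the paper's own \Cref{lem:dense-rectangle} together with \Cref{sec:main-thm-proof}).

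Your two-step plan ``compress the cover to one coordinate, then apply AUY'' is appealing, but the compression step as you state it is too strong and is where the proposal breaks down. Your lemma asserts $\log C(g)\le \log C(g^n)/n + O(\log\log(|\calX|\cdot|\calY|))$; equivalently $C(g^n)\ge (C(g)/\mathrm{polylog})^n$. This is essentially a direct-sum statement for nondeterministic communication, and it is \emph{not} what FKNN prove (and to my knowledge remains open in this strength). Your averaging heuristic does not get you there: restricting the cover of $g^n$ to a slice $x_{[n]\setminus\{i\}}=\alpha$, $y_{[n]\setminus\{i\}}=\beta$ never reduces the number of rectangles---each of the original $C$ rectangles either survives the restriction or becomes empty, so the induced cover can still have size $C$. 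The ``$\log C/n$ bits per coordinate'' intuition is about \emph{density}, not about \emph{count}, and does not convert to a small cover of $g$ without additional ideas.

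What FKNN actually do avoids the compression lemma entirely. The first step is a one-line projection argument: any monochromatic rectangle $R\subseteq \calX^n\times\calY^n$ projects in each coordinate to a monochromatic rectangle for $g$, and since the product of the projected densities is at least the density of $R$, some coordinate receives density at least $(\text{density of }R)^{1/n}$. Applied to the densest rectangle in a size-$2^t$ cover of $g^n$, this gives a monochromatic rectangle in $M_g$ of density $\ge 2^{-t/n}$. Crucially, this holds for $g$ restricted to any sub-rectangle $\calX'\times\calY'$ (restrict the cover to $(\calX')^n\times(\calY')^n$ first), so the property is hereditary. The second step builds a protocol for $g$ directly from this hereditary large-rectangle property and yields $D(g)\le O\bigl((t/n+\log\log(|\calX|\cdot|\calY|))^2\bigr)$; rearranging gives the theorem. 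This is exactly the template the present paper follows for its main result: \Cref{lem:dense-rectangle} is the ``large monochromatic rectangle'' step, and \Cref{sec:main-thm-proof} is the recursive protocol construction. If you want to repair your outline, replace the cover-compression lemma by this large-rectangle lemma and then build the protocol directly rather than routing through $C(g)$.
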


This was the best known bound for the deterministic communication complexity of $g^n$ until our recent work with Rao \cite{IR24b}, where we gave an XOR lemma: a lower bound for the deterministic communication complexity of $\oplus\circ g := g(x_1,y_1)\oplus \ldots \oplus g(x_n,y_n)$.
To state that result, we need the concept of rank in the context of communication complexity. For a function $g(x,y)$, let $M_g$ denote the matrix with a row for each $x\in \calX$, a column for each $y\in \calY$, and whose $(x,y)$-th entry is simply $g(x,y)$. We use the notation $\rk(g)$ to denote the rank of the matrix $M_g$. 

One might be tempted to guess that $D(\oplus \circ g) = \Omega(n\cdot D(g))$, i.e. one cannot do much better than computing $g$ in each coordinate followed by taking the parity of all the bits. However, this is false in general; when $g$ is itself the parity function on two bits, it is possible to compute $\oplus\circ g$ by simply exchanging 2 bits: $x_1\oplus\ldots\oplus x_n$ and $y_1\oplus\ldots\oplus y_n$. In \cite{IR24b}, we showed that if $g$ requires sufficiently large communication, then computing $\oplus\circ g$ requires much larger communication.

\begin{theorem}[\cite{IR24b}]\label{thm:xor-lemma}
    There exists $c_0 \geq 1$ such that for any function $g$ with $D(g) \geq c_0$, $D(\oplus\circ g) \geq \log C(\oplus\circ g) \geq n\cdot\Big(\frac{\Omega(D(g))}{\log\rk(g)} - \log\rk(g)\Big)$.
\end{theorem}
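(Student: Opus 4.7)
My plan for \Cref{thm:xor-lemma} combines a single-coordinate restriction with an iterative rank-budget analysis. Let $\calR$ be a monochromatic rectangle cover of $\oplus\circ g$ with $|\calR|\le 2^c$; the goal is to lower-bound $c$. For any coordinate $i$ and any fixing $(x_j^0,y_j^0)_{j\neq i}$ of the other coordinates, the restriction of $\oplus\circ g$ to the resulting slice equals $g(x_i,y_i)\oplus c_0$ for the known constant $c_0 := \bigoplus_{j\neq i}g(x_j^0,y_j^0)$. Restricting the rectangles of $\calR$ to this slice produces a monochromatic cover of $g$ on $\calX\times\calY$ of size at most $|\calR|$, giving only the trivial inequality $c\ge \log C(g)$. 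Any $n$-fold gain must come from using the freedom in the fixing across many coordinates.

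I would amplify inductively: pick coordinate $1$ and fix its other-coordinate values so as to shrink the surviving rectangle set by a factor of roughly $C(g)/\rk(g)^{O(1)}$, then recurse on the residual problem, which is $\oplus\circ g$ on $n-1$ inputs covered by the surviving rectangles. The per-step $\rk(g)^{O(1)}$ loss is the technical heart: under the $\pm 1$ embedding one has $M_{\oplus\circ g}=M_g^{\otimes n}$, so any rectangle of $\oplus\circ g$ restricted to a single coordinate's slice lies in the row/column span of $M_g$, and only $\rk(g)^{O(1)}$ distinct ``projected types'' of rectangle can appear; a pigeonhole over types identifies the good fixing. Iterating $n$ times yields $c\ge n\cdot\bigl(\log C(g)-O(\log\rk(g))\bigr)$, and combining with a rank-sensitive conversion $\log C(g)\ge \Omega(D(g)/\log\rk(g))$ on the single-coordinate covers produces the bound claimed in the statement.

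The main obstacle is establishing that rank-sensitive cover-to-protocol conversion. The generic inequality $D(g)\le O(\log^2 C(g))$ would only recover the weaker \Cref{thm:direct-sum}, so the argument must improve to roughly $D(g)\lesssim \log C(g)\cdot \log\rk(g)$ — at least for covers that arise as slices of $M_g^{\otimes n}$. Such covers inherit algebraic structure from the ambient tensor, and I expect this structure is what enables the stronger conversion. Making this conversion quantitative, and simultaneously ensuring that the pigeonhole step at each iteration genuinely loses only $O(\log\rk(g))$ rather than something larger, is the crux of the proof. The additive $-n\log\rk(g)$ in the statement is exactly the cumulative pigeonhole loss across the $n$ iterations, and the hypothesis $D(g)\ge c_0$ is what keeps the final bound positive once both sources of loss are accounted for.
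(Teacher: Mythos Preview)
Your proposed route differs from the paper's. The paper does not prove \Cref{thm:xor-lemma} directly; it is obtained as the special case $f=\oplus$ (for which $\mathsf{s}(f)=n$) of \Cref{thm:main}. That proof does not iterate over the $n$ coordinates of the composition. Instead it establishes a single structural lemma (\Cref{lem:dense-rectangle}): any cover of $\oplus\circ g$ by $2^T$ monochromatic rectangles forces $g$ itself to contain a monochromatic rectangle of density $2^{-2T/n}\cdot(4\rk(g))^{-2}$. The argument is information-theoretic: over a suitable product distribution on $\calX^n\times\calY^n$, subadditivity of entropy across the $n$ coordinates (\Cref{claim:subadditivity}) locates one index $i$ and a conditioning on $x_{<i},y_{>i}$ under which the marginals on $x_i$ and on $y_i$ each have large support, and one then checks that the product of these supports is monochromatic for $g$. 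Because the same lemma applies to every subfunction $g'$ of $g$ (the cover of $\oplus\circ g$ restricts to a cover of $\oplus\circ g'$), the Nisan--Wigderson rank-halving recursion can then be run on $g$; the resulting leaf count yields $D(g)\le O\big((T/n+\log\rk(g))\cdot\log\rk(g)\big)$, which rearranges to the theorem.

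Your plan has a genuine gap, and it is not where you locate it. The ``rank-sensitive conversion'' $D(g)\lesssim \log C(g)\cdot\log\rk(g)$ that you call the main obstacle in fact follows from the standard Nisan--Wigderson argument: a cover of $g$ by $2^c$ monochromatic rectangles restricts to a cover of every submatrix, so every submatrix has a monochromatic rectangle of relative density $\ge 2^{-c}$, and the recursion terminates with communication $O\big((c+\log\rk(g))\log\rk(g)\big)$. The step that is not justified is the per-coordinate shrinkage: why should some fixing in coordinate $1$ reduce the surviving rectangle count by a factor of roughly $C(g)/\rk(g)^{O(1)}$? Your pigeonhole claim --- that restrictions of monochromatic rectangles of $\oplus\circ g$ to a single coordinate ``lie in the row/column span of $M_g$'' and therefore come in only $\rk(g)^{O(1)}$ types --- does not hold as stated. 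A rectangle $R\subseteq\calX^n\times\calY^n$ projects in coordinate $1$ to some $A_1\times B_1$ on which $g$ need not be constant, and there is no evident bound of $\rk(g)^{O(1)}$ on the number of distinct such projections, nor on any coarser ``type'' that would control how many rectangles survive a given fixing of $(x_1,y_1)$. Without that shrinkage your induction delivers only the trivial $c\ge\log C(g)$ once, not $n$ times. The paper's entropy argument is exactly what replaces this missing mechanism: it extracts the factor-$n$ gain in a single averaging step over all coordinates simultaneously, rather than attempting to peel them off one at a time.
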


Rank and communication are believed to be closely related. For any function $g$, we know that $D(g) \geq \log \rk(g)$ and the log-rank conjecture, due to Lovász and Saks \cite{LS} asserts that $D(g) \leq (\log\rk(g))^{O(1)}$. Sudakov and Tomon \cite{ST} (see also \cite{Lovett} who gave a slightly weaker bound) recently showed that $D(g) \leq O(\sqrt{\rk(g)})$. When $D(g) = \Omega(\log^2\rk(g))$, \Cref{thm:xor-lemma} implies $D(\oplus\circ  g) \geq \Omega(n\cdot\sqrt{D(g)})$. %Moreover, from the work of G\"{o}\"{o}s, Pitassi and Watson \cite{GPW}, we know there exists $g$ with $D(g) \geq \tilde\Omega(\log^2\rk(g))$. Therefore, if $D(g) = \Omega(\log^2\rk(g))$, \Cref{thm:xor-lemma} implies \Cref{thm:direct-sum}. 
Yang \cite{Yang} observed that $\rk(\oplus\circ g) \geq (\rk(g) - 1)^n - 1$ since rank tensorizes. He then concluded that if $D(g) \ll \log^2\rk(g)$, \[D(\oplus \circ g) \geq \log\rk(\oplus \circ g) \geq \Omega(n\cdot \log\rk(g))\geq  \Omega(n\cdot \sqrt{D(g)}).\] Therefore, for all $g$ with sufficiently large communication complexity, $D(\oplus\circ g)  = \Omega(n\cdot \sqrt{D(g)})$.

In this work, we generalize \Cref{thm:xor-lemma} to all functions $f$ with large sensitivity, which is defined as follows. The sensitivity of $f$ at $z\in \{0,1\}^n$ is  
\[\mathsf{s}_z(f) := |\{i: f(z) \neq f(z_1,\ldots,z_{i-1},1-z_i,z_{i+1},\ldots,z_n)\}|,\] and the sensitivity of $f$, is given by $\mathsf{s}(f) := \max_z \mathsf{s}_z(f)$. Our main result is the following.

\begin{theorem}[Main Theorem]\label{thm:main}
    There exists $c_0 \geq 1$ such that for any function $g$ with $D(g) \geq c_0$ and any function $f: \{0, 1\}^n \to \{0,1\}$, 
    \[D(f\circ g) \geq \log C(f\circ g) \geq \mathsf{s}(f)\cdot\bigg( \frac{\Omega(D(g))}{\log \rk(g)} - \log \rk(g)\bigg).\] 
\end{theorem}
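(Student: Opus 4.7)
The plan is to prove the theorem in two stages. First, use the sensitivity of $f$ to restrict the problem to a subcube on which $f$ becomes a function with full sensitivity at a single point. Second, establish the analogous lower bound for such a fully-sensitive restricted function by adapting the cover-based argument behind \Cref{thm:xor-lemma}. For the first stage, let $z \in \{0,1\}^n$ achieve the maximum sensitivity $s = \mathsf{s}(f)$ with sensitive coordinates $S = \{i_1, \ldots, i_s\}$. Since $D(g) \geq c_0$ forces $g$ to be non-constant, for each $i \notin S$ I can fix inputs $u_i \in \calX$ and $v_i \in \calY$ with $g(u_i, v_i) = z_i$. Let $\tilde{f} : \{0,1\}^s \to \{0,1\}$ be the restriction of $f$ obtained by fixing the coordinates outside $S$ to agree with $z$. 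Then $\tilde{f}$ has full sensitivity $s$ at $z_S$, and any monochromatic cover of $f \circ g$ by $N$ rectangles induces a monochromatic cover of $\tilde{f} \circ g$ on $\calX^s \times \calY^s$ of size at most $N$, by intersecting each rectangle with the slice where coordinate $i \notin S$ is $(u_i, v_i)$. It therefore suffices to establish the bound in the setting where $n = s$, $f(0^s) = 0$, and $f(e_i) = 1$ for every $i \in [s]$.

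For the second stage, I would follow the strategy underlying \Cref{thm:xor-lemma} from \cite{IR24b}. The key property used in that proof is that flipping any single input coordinate of $\oplus$ changes its output; the fully-sensitive $f$ produced above inherits exactly this property at $0^s$. I would restrict attention to the Hamming-one neighborhood of $0^s$, namely inputs $(a,b)$ with $(g(a_1,b_1), \ldots, g(a_s,b_s)) \in \{0^s\} \cup \{e_i : i \in [s]\}$, on which $f \circ g$ agrees with the $s$-bit OR. Given a cover, I would argue coordinate-by-coordinate: for each $i \in [s]$, after fixing the $g$-values at all other coordinates to $0$ via chosen $(u_j, v_j)$ inputs, the cover must still distinguish $g(x_i, y_i) = 0$ from $g(x_i, y_i) = 1$. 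Combining this coordinate-wise distinguishability with the rank-tensorization argument of \cite{IR24b, Yang} over the $s$ coordinates (losing $\log \rk(g)$ per coordinate) should yield the desired $s \cdot \bigl(\Omega(D(g))/\log\rk(g) - \log \rk(g)\bigr)$ bound.

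The main obstacle will be the second stage. In the XOR lemma proof, every input in $\{0,1\}^n$ is maximally sensitive for $\oplus$, so the coordinate-wise decomposition is globally available; in our setting we have sensitivity only at a single point, and the behavior of $f$ far from $z$ could, in principle, allow the cover to collapse inputs in ways that bypass the sensitive structure. The technical heart of the proof is therefore to confirm that confining attention to the Hamming-one neighborhood of $z$ is already sufficient to force the cover to resolve $s$ approximately independent copies of $g$, and that the local independence propagates through the cover in the way the XOR lemma predicts. Once this is established, Yang's rank-tensorization observation closes the argument exactly as in \Cref{thm:xor-lemma}.
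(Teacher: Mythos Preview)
Your first stage is fine and matches what the paper does implicitly: one may as well assume $n=s$ and that $f$ is fully sensitive at a single point $z$.

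The second stage, however, has a real gap. The set of inputs in the Hamming-one neighbourhood of $z$ --- that is, the set of $(a,b)\in\calX^s\times\calY^s$ with $(g(a_1,b_1),\ldots,g(a_s,b_s))\in\{z\}\cup\{z^{\oplus i}:i\in[s]\}$ --- is \emph{not} a combinatorial rectangle (nor a product of rectangles). Intersecting a monochromatic rectangle of $f\circ g$ with this set therefore does not produce a rectangle, and so a cover of $f\circ g$ does not induce a cover of $\mathrm{OR}\circ g$ in any usable sense. Your coordinate-by-coordinate fixing does yield, for each $i$, a cover of a single copy of $g$, but that only gives $T\ge\log C(g)$, not anything scaling with $s$. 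The sentence ``confining attention to the Hamming-one neighbourhood is already sufficient to force the cover to resolve $s$ approximately independent copies of $g$'' is precisely the whole theorem, and you have not supplied a mechanism for it. Also, Yang's rank-tensorization is not used in the proof of \Cref{thm:main} at all; it enters only in the derivation of \Cref{cor}.

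The paper's route is different. It proves a single structural lemma (\Cref{lem:dense-rectangle}): any cover of $f\circ g$ by $2^T$ monochromatic rectangles forces $g$ itself to contain a monochromatic rectangle of density $2^{-2T/s}\cdot(4\,\rk(g))^{-2}$. The proof places a correlated distribution $p$ on $\calX^s\times\calY^s$ where each $(x_i,y_i)$ is uniform conditioned on $g(x_i,y_i)=z_i$; picks a rectangle $R$ of the cover with $p(R)\ge 2^{-T}$; and uses an entropy/chain-rule argument to find a coordinate $i$ and conditioning $x_{<i},y_{>i}$ under which the marginal supports $A\subseteq\calX$, $B\subseteq\calY$ are large. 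The crucial step --- absent from your sketch --- is that $A\times B$ is monochromatic for $g$: any $(x_i,y_i)\in A\times B$ can be completed, using the rectangle structure of $R$, to a point $(x',y')\in R$ with $g(x'_j,y'_j)=z_j$ for all $j\ne i$, and then sensitivity of $f$ at $z$ forces $g(x_i,y_i)=z_i$. With \Cref{lem:dense-rectangle} in hand, the theorem follows by the Nisan--Wigderson recursion: repeatedly extract a dense monochromatic rectangle from $g$, note that one side halves the rank while the other shrinks the domain, and count leaves to bound $D(g)$ in terms of $T/s$ and $\log\rk(g)$.
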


Several basic functions, such as parity, the AND function, the OR function etc. all have sensitivity $n$, and the above result gives a statement analogous to \Cref{thm:xor-lemma} for such functions. Understanding the complexity of computing $f\circ g$ has also received significant attention in the area of query-to-communication lifting. Here, one typically fixes an inner function $g$, and asks for the communication complexity of computing $f\circ g$ for all $f$. We discuss the connection between our work and lifting theorems in more detail. 

\subparagraph*{Connections to Lifting Theorems. } We begin by noting that when $f$ only depends on a small number of coordinates, $f\circ g$ can easily be computed by computing $g$ in the appropriate coordinates. More generally, if one has a short decision tree (see \Cref{sec:prelim} for a definition) for $f$, then $D(f\circ g)$ can be computed by tracing out a root-to-leaf path in the decision tree and communicating to only compute $g$ in the coordinates corresponding to the path. Let the \emph{decision tree complexity} of $f$, denoted $\mathsf{DT}(f)$, be the least depth of a decision tree among those that compute $f$. Using the best decision tree for $f$, we get $D(f\circ g) \leq \mathsf{DT}(f)\cdot D(g)$. 

%A decision tree of depth $d$ is an adaptive (deterministic) query algorithm, making at most $d$ queries to compute a given function. The algorithm queries variables $x_{i_1},\ldots,x_{i_d}$ adaptively and outputs a bit based on the values of the variables. We say that a decision tree computes a function $f$, if on every input $x$, the algorithm outputs $f(x)$. The decision tree complexity of $f$, denoted $\mathsf{DT}(f)$ is the least depth of a decision tree among those that compute $f$. 
%A decision tree is a rooted binary tree, where each internal vertex is labeled by a variable, each leaf by a bit in $\{0,1\}$, and the two outgoing edges from each vertex are labeled by distinct bits. Moreover, on any root to leaf path no two vertices share the same label. A decision tree gives an adaptive algorithm to compute a given function: one starts at the root, queries the corresponding variable and depending on its value follows the corresponding edge, repeating this process until a leaf is reached. We say a decision tree computes a function $f$ if for every root to leaf path, and every $x$ consistent with the values of the queried variables on the path, $f(x)$ is equal to the leaf label. The depth of a decision tree is the length of the longest root-to-leaf path in the tree. The quantity $\mathsf{DT}(f)$ is the least depth of a decision tree among those that compute $f$. 
%Using the best decision tree for $f$, we get $D(f\circ g) \leq \mathsf{DT}(f)\cdot D(g)$. 

A natural question is whether or not the above bound is optimal. There is a large body of literature showing that the above bound is indeed optimal for several \emph{gadgets} $g$. The earliest such result is due to Raz and McKenzie \cite{RM} who considered the index function gadget, $\mathsf{Ind}_m:[m]\times \{0,1\}^m \to \{0,1\}$ given by $\mathsf{Ind}_m(x,y) = y_x$. They showed that for $m = n^{O(1)}$, and any function $f:\{0,1\}^n\to \{0,1\}$, $D(f\circ \mathsf{Ind}_m) = \Theta(\mathsf{DT}(f)\cdot D(\mathsf{Ind}_m))$. %Recently, this problem and its variants have been studied in the area of query-to-communication lifting. Here, one typically shows that for an appropriate gadget $g$, and any Boolean function $f$, the communication complexity of $f\circ g$ can not be much less than  $\mathsf{DT}(f)\cdot D(g)$. 
Their proof was simplified by Göös, Pitassi, and Watson \cite{GPW}, who also used the lifting paradigm to exhibit functions with rank $r$ and communication complexity $\tilde\Omega(\log^2 r)$. The result of \cite{RM} was recently improved by \cite{LMMPZ}, they showed the same result as that of \cite{RM} for $m = O(n\log n)$. 
Chattopadhyay, Kouck\'y, Loff, and Mukhopadhyay \cite{CKLM} showed a similar result for the inner product gadget (as well as for any gadget with a certain pseudorandom property)  $\mathsf{IP}_m : \{0,1\}^m\times\{0,1\}^m\to \{0,1\}$ given by $\mathsf{IP}_m(x,y) = x_1y_1\oplus\ldots\oplus x_my_m$. They showed that for $m = \Omega(\log n)$ and any function $f$, $D(f\circ \mathsf{IP}_m) = \Theta(\mathsf{DT}(f)\cdot D(\mathsf{IP}_m))$. Manor and Meir \cite{MM} proved that $D(f\circ g) = \Omega(\mathsf{DT}(f)\cdot D(g))$ for all functions $g$ with discrepancy at most $n^{- O(1)}$. Lifting theorems have also been studied for various other complexity measures, such as decision tree complexity to randomized communication \cite{GPWrand, CFKMP, MM}, resolution width to DAG-like communication \cite{GGKS,LMMPZ}, approximate degree to approximate rank \cite{Sherstovlifting,PR} etc.  

In light of the preceding discussion, \Cref{thm:main} can be seen as a statement that lifts sensitivity to deterministic communication complexity for any gadget $g$ whose communication complexity is a large enough constant. We also note that ours is not the first work to lift a Boolean function complexity measure other than decision tree complexity to communication. The work of Sherstov \cite{Sherstovlifting} yields randomized communication lower bounds for compositions with the index gadget using approximate degree as the complexity measure for the outer function. %Several works have also studied other complexity measures of computing $f\circ g$, such as randomized communication \cite{CFKMP}, partition number \cite{GPW, GJPW}, rank \cite{PR} etc. %Several works have studied lower bounds on $D(f\circ g)$ for a wide range of functions $g$ \cite{CFKMP,MM}, and also for several of its complexity measures, such as randomized communication \cite{CFKMP}, partition number \cite{GPW, GJPW}, rank \cite{PR} etc. The most general result in this area is due to Manor and Meir \cite{MM} who prove that $D(f\circ g) = \Omega(\mathsf{DT}(g)\cdot D(g))$ for all functions $g$ with discrepancy at most $n^{- O(1)}$. The discrepancy upper bound imposes the condition that the size of the domain of $g$ is $n^{\Omega(1)}$.

\subparagraph*{Related Boolean Function Complexity Measures. }Sensitivity and decision tree complexity are two among a few well-studied Boolean function complexity measures, such as certificate complexity, block-sensitivity, and degree; we refer the reader to the survey of Buhrman and de Wolf \cite{BdW} for a detailed reference on this topic. We use the connections between one of these complexity measures, namely the degree, to derive a lower bound on $D(f\circ g)$ solely in terms of $D(g)$ and $\mathsf{DT}(f)$. The degree of $f$, denoted $\mathsf{deg}(f)$ is the degree of the unique real, multilinear polynomial computing $f$. %In this work, we obtain lower bounds for $D(f\circ g)$ in terms of the sensitivity and the degree of $f$, which we define next.  
%
%
%In this work, we give a lower bound on $D(f\circ g)$, for any $g$ whose communication complexity is  a sufficiently large constant. Our lower bound is in terms of $D(g)$ and the sensitivity of $f$. 
%The sensitivity of $f$ at $z\in \{0,1\}^n$ is  
%\[\mathsf{s}_z(f) := |\{i: f(z) \neq f(z_1,\ldots,z_{i-1},1-z_i,z_{i+1},\ldots,z_n)\}|,\] and the sensitivity of $f$, is given by $\mathsf{s}(f) := \max_z \mathsf{s}_z(f)$. 
%We recall that for every function $f:\{0,1\}^n \to \{0,1\}$, there exists a unique real, multilinear polynomial \[q(x) = \sum_{S\subseteq [n]} c_S\cdot \prod_{i\in S}x_i,\] such that $q(x) = f(x)$ for all $x\in \{0,1\}^n$. The degree of $f$, denoted $\mathsf{deg}(f)$, is the degree of $q$. 
The above complexity measures are known to be related to each other up to polynomial factors. In particular, we know that for any $f$, 
\begin{align}
     \mathsf{deg}(f), \mathsf{s}(f) &\leq \mathsf{DT}(f), \label{eq:dt-all}\\
    \sqrt{\mathsf{deg}(f)} \leq \mathsf{s}(f) &\leq 2\cdot \mathsf{deg}(f)^2, \label{eq:deg-sens} \text{ and }\\
    \mathsf{DT}(f) &\leq 2\cdot \mathsf{deg}(f)^3. \label{eq:deg-DT}
\end{align}

In the preceding facts, \Cref{eq:dt-all} is due to Nisan and Szegedy \cite{NS} (see also \cite{Nisan}). The lower bound in \Cref{eq:deg-sens} is due to Huang \cite{Huang}, and the upper bound is again due to Nisan and Szegedy \cite{NS}. Lastly, \Cref{eq:deg-DT} was shown by Midrijānis \cite{Mid}. Next, we discuss our results and use the connections between the above complexity measure to derive lower bounds on the communication complexity of computing $f\circ g$.

%Sensitivity and decision tree complexity are two among a few  well-studied complexity measures of Boolean functions. A related complexity measure relevant to our work, is the degree of $f$; we refer the reader to the survey of Buhrman and de Wolf \cite{BdW} for a background on various Boolean function complexity measures. For every function $f:\{0,1\}^n \to \{0,1\}$, there exists a unique multilinear polynomial \[q(x) = \sum_{S\subseteq [n]} c_S\cdot \prod_{i\in S}x_i,\] such that $q(x) = f(x)$ for all $x\in \{0,1\}^n$. We define $\mathsf{deg}(f)$ to be equal to the degree of $q$. 

\Cref{thm:main} together with Yang's observation yields the following lower bound on $D(f\circ g)$ in terms of the sensitivity and degree.

\begin{corollary}\label{cor}
    There exists $c_0 \geq 1$ such that for any function $g$ with $D(g) \geq c_0$ and any function $f: \{0, 1\}^n \to \{0,1\}$, 
    \[D(f\circ g) \geq \frac{\mathsf{s}(f)\cdot\mathsf{deg}(f)}{2\mathsf{s}(f) + \mathsf{deg}(f)}\cdot \Omega\bigg(\frac{D(g)}{\log\rk(g)}+ \log\rk(g)\bigg).\]
\end{corollary}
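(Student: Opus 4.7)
The plan is to combine \Cref{thm:main} with a second lower bound involving $\mathsf{deg}(f)$, then take a weighted combination of the two. From \Cref{thm:main}, for an absolute constant $c > 0$,
\[ D(f\circ g)\;\ge\;\mathsf{s}(f)\Big(\tfrac{cD(g)}{\log\rk(g)}-\log\rk(g)\Big). \]
The second bound I will establish is $D(f\circ g)\ge\log\rk(f\circ g)\ge\mathsf{deg}(f)\log\rk(g)-O(\mathsf{deg}(f))$, obtained by showing $\rk(f\circ g)\ge(\rk(g)-1)^{\mathsf{deg}(f)}$; this generalizes Yang's observation (the XOR case) to arbitrary $f$.

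To prove the rank bound, let $d:=\mathsf{deg}(f)$ and fix a monomial $R\subseteq[n]$ with $|R|=d$ and nonzero coefficient in the multilinear expansion of $f$. Because $D(g)\ge c_0$ ensures $g$ takes both values, I can restrict $x_i,y_i$ for $i\notin R$ to fix each $g(x_i,y_i)$, making the corresponding submatrix of $M_{f\circ g}$ equal to $M_{h\circ g}$ for the $d$-variable restriction $h$ of $f$, which also has degree $d$. It therefore suffices to show $\rk(h\circ g)\ge(\rk(g)-1)^d$. For this, decompose each row $g(x_i,\cdot)=\bar g(x_i)\mathbf{1}+g^*(x_i,\cdot)$ into its mean and its row-mean-zero part, noting that $\rk(g^*)\ge\rk(g)-1$. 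Expanding $M_{h\circ g}=\sum_T\hat h(T)\prod_{i\in T}g(x_i,y_i)$ in these components, the row space decomposes orthogonally across subsets $U\subseteq[d]$, where the $U$-block lies in $(\text{row span of }g^*)^{\otimes U}\otimes(\mathbb{R}\mathbf{1})^{\otimes \bar U}$. The top block $U=[d]$ has scalar coefficient $\hat h([d])\ne 0$ and contributes a subspace of dimension $\rk(g^*)^d$, yielding the claimed rank lower bound.

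With both bounds in hand, I combine them by forming $\alpha\cdot(\text{Bound 1})+\beta\cdot(\text{Bound 2})$ and choosing $\beta=\alpha\mathsf{s}(f)(c+1)/\mathsf{deg}(f)$, which cancels the $-\mathsf{s}(f)\log\rk(g)$ penalty in the first bound and equalises the coefficients of $D(g)/\log\rk(g)$ and $\log\rk(g)$. Simplifying yields
\[ D(f\circ g)\;\ge\;\frac{c\,\mathsf{s}(f)\mathsf{deg}(f)}{\mathsf{deg}(f)+(c+1)\mathsf{s}(f)}\Big(\tfrac{D(g)}{\log\rk(g)}+\log\rk(g)\Big), \]
which matches the claimed expression after the constants are absorbed into $\Omega(\cdot)$. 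The main obstacle is the rank lower bound in the second paragraph; the weighted combination is routine algebra.
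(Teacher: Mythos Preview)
Your proposal is correct and follows essentially the same route as the paper: it first proves the rank lower bound $\rk(f\circ g)\ge(\rk(g)-1)^{\mathsf{deg}(f)}$ (the paper's \Cref{lemma:rank-lb}) by projecting each $g(x_i,\cdot)$ onto the space orthogonal to $\mathbf 1$ and isolating the top-degree tensor block, then combines this with \Cref{thm:main} via a convex combination. The only cosmetic differences are that you first pass to a $d$-variable restriction $h$ (which is fine since the top monomial survives, as $|R|=\deg f$), and that your weights produce a denominator $\mathsf{deg}(f)+(c+1)\mathsf{s}(f)$ rather than $2\mathsf{s}(f)+\mathsf{deg}(f)$, which is harmlessly absorbed into the $\Omega(\cdot)$.
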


Using the fact that $2\mathsf{s}(f) + \mathsf{deg}(f) \leq 3\max\{\mathsf{s}(f),\mathsf{deg}(f)\}$ and the AM-GM inequality, we get the  following bound on $D(f\circ g)$ from \Cref{cor} 
\begin{align}\label{eq:rank-ind-lb}
D(f\circ g) \geq \Omega\big(\min\{\mathsf{s}(f),\mathsf{deg}(f)\}\cdot\sqrt{D(g)}\big).
\end{align}

Combining \Cref{eq:rank-ind-lb} with \Cref{eq:deg-sens,eq:deg-DT} we get that for any $g$ with sufficiently large communication complexity and any $f$, \[D(f\circ g) \geq \Omega(\mathsf{DT}(f)^{1/6}\cdot\sqrt{D(g)}).\]
%\footnote{One can show that $D(f\circ g) \geq \Omega(\mathsf{DT}(f)^{1/3}\cdot \sqrt{D(g)})$ by appealing to the block-sensitivity of $f$, which is at least $\Omega(\mathsf{DT}(f)^{1/3})$ \cite{Nisan, NS}. To see this, we can apply \Cref{thm:main} with $f'$ being the appropriate restriction of $f$ so that the inputs in each sensitive block of $f$ are determined by the corresponding input of $f'$. The sensitivity of $f'$ is then the block-sensitivity of $f$.} \]

\subparagraph*{Other Related Work. } Several works have studied the complexity of repeated computation. In communication complexity, this began with the influential proof of the randomized communication lower bound for the disjointness function \cite{KalyanasundaramSchnitger87, Razborov92}, which is the composition of the OR function on $n$ bits with the AND gadget. Their techniques were modified to prove direct sum type statements \cite{CSWY, JRS, HJMR} and streaming lower bounds \cite{BJKS}. Around the same time Raz \cite{Raz} proved the parallel repetition theorem, which was simplified by Holenstein \cite{Holenstein} and Rao \cite{Rao}. Barak, Braverman, Chen and Rao \cite{BBCR} adapted the techniques developed for the parallel repetition theorem  to prove general direct sum statements for randomized communication complexity. Several aspects of this were improved over the last decade, for example \cite{BRWY} gave a direct product theorem and \cite{Kol,Sherstov} gave near-optimal distributional direct-sum statements for product distributions. Recently, Yu \cite{Yu} gave an XOR lemma for bounded-round randomized communication, which we improved in our work with Rao \cite{IR24a} to obtain an XOR lemma for general randomized communication.

\subparagraph*{Techniques.} The following lemma is key to the proof of \Cref{thm:main}.
\begin{lemma}\label{lem:dense-rectangle}
If $f \circ g$ can be covered with $2^{T}$ monochromatic rectangles, then $g$ contains a monochromatic rectangle of density $2^{-2T/\mathsf{s}(f)}\cdot (4\cdot\rk(g))^{-2}$.
\end{lemma}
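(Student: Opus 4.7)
The plan exploits the sensitivity witness of $f$ in three stages: reduce $f\circ g$ to an $s$-dimensional sub-problem, project one-dimensional slices of the cover to extract $g$-monochromatic rectangles, and finish with a rank-based density lemma for $g$.

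First, let $z\in\{0,1\}^n$ witness $\mathsf{s}(f)=s$ with sensitive coordinates $i_1,\ldots,i_s$. Assume without loss of generality that $f(z)=0$ and that $g$ is nonconstant (otherwise the lemma is vacuous). For each non-sensitive coordinate $i$, pick $(a_i,b_i)\in\calX\times\calY$ satisfying $g(a_i,b_i)=z_i$, and restrict $f\circ g$ along these fixings. The result is an induced function $F:(\calX\times\calY)^s\to\{0,1\}$, covered by at most $2^T$ rectangles, with $F=0$ on inputs where $g(x_j,y_j)=z_{i_j}$ for every $j\in[s]$, and $F=1$ on inputs where exactly one of those equalities fails.

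The key geometric observation is as follows. For any rectangle $R=X\times Y$ of the induced cover, any coordinate $j\in[s]$, and any fixing of the other sensitive coordinates to pairs $(a_k,b_k)\in g^{-1}(z_{i_k})$, the intersection of $R$ with the line obtained by varying only the $j$-th coordinate projects onto the $j$-th slot as a rectangle $X_j(a)\times Y_j(b)\subseteq\calX\times\calY$, where $X_j(a)=\{x\in\calX:(a_1,\ldots,x,\ldots,a_s)\in X\}$ and similarly for $Y_j(b)$. By the sensitivity structure of $F$, this projection is $g$-monochromatic: its value is $z_{i_j}$ if $R$ is a $0$-rectangle, and $1-z_{i_j}$ if $R$ is a $1$-rectangle. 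A double-counting argument across the $\le 2^T$ rectangles, the $s$ sensitive coordinates, and the valid fixings of the remaining coordinates, together with a balancing step between the two sides of each rectangle, produces some projected rectangle $X^\star\times Y^\star\subseteq\calX\times\calY$ of density at least $2^{-2T/\mathsf{s}(f)}$.

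Finally, since $g$ restricted to $X^\star\times Y^\star$ has rank at most $\rk(g)$, a standard Nisan--Wigderson-style rank-density lemma delivers a genuinely $g$-monochromatic subrectangle of density $1/(4\rk(g))^2$ inside $X^\star\times Y^\star$, and multiplying densities gives the claimed bound $2^{-2T/\mathsf{s}(f)}\cdot(4\rk(g))^{-2}$. The chief obstacle is the double-counting step in the previous paragraph: a naive pigeonhole over the at most $2^T$ rectangles yields only density $2^{-T}$, and extracting the full exponent $2T/\mathsf{s}(f)$ requires amplifying with the sensitivity structure -- balancing contributions of $0$- and $1$-rectangles across all $s$ sensitive coordinates simultaneously. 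This is where the combinatorial shape of sensitivity (as opposed to, say, degree or decision-tree depth) appears essential, and is the place where the argument most deviates from the standard lifting toolkit.
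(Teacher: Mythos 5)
Your high-level picture overlaps with the paper's proof in one important respect: you correctly identify that for a monochromatic rectangle $R$ of the cover, a one-dimensional slice along a sensitive coordinate (with the other sensitive coordinates frozen to preimages of the appropriate bits $z_{i_k}$) must be $g$-monochromatic, since flipping $g$ on that coordinate would change the value of $f\circ g$ inside $R$. This is the same ``sensitivity forces monochromaticity'' observation the paper exploits. However, the rest of the argument does not hold up.

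\textbf{The density step is the proof, and it is missing.} You explicitly flag the double-counting/balancing step as the chief obstacle but do not resolve it; as stated it is a placeholder rather than an argument. The paper's actual mechanism is an entropy chain-rule / subadditivity argument: sample $(X,Y)$ from the correlated distribution $p(\cdot\mid R)$ where $p$ independently conditions each coordinate on $g(x_i,y_i)=z_i$, and lower bound $\sum_{i\in[s]} H(X_i\mid X_{<i},X_{>s},Y_{>i}) + H(Y_i\mid X_{<i},X_{>s},Y_{>i})$ via nonnegativity of KL divergence and the bound $p(R)\ge 2^{-T}$. Averaging over $i$ then yields a coordinate and a conditioning on which both one-dimensional support sets $A$ and $B$ are large \emph{under the uniform distribution}. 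A naive pigeonhole over the $2^T$ rectangles, or over fixings, cannot produce the $2^{-2T/s}$ exponent because it does not exploit the product structure of $R$ across coordinates; the chain rule is what lets the budget $T$ be amortized over $s$ coordinates. There is no elementary ``balancing'' substitute here that I can see.

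\textbf{The rank factor comes from bias, not from a rank-density lemma at the end.} By your own geometric observation, the projected rectangle $X^\star\times Y^\star$ (if you could find one of density $2^{-2T/s}$) would already be $g$-monochromatic, which would make your final step both unnecessary and the stated lemma strictly weaker than what you claim to prove. In fact, the $(4\,\rk(g))^{-2}$ in the lemma is not obtained by a subsequent Nisan--Wigderson rank-density step; it is the price of converting from $p$-measure to uniform measure. The ratio $\max p(x_{\le s}, y_{\le s})/u(x_{\le s}, y_{\le s})$ is at most $(4\,\rk(g))^s$ when $g$ is not too biased (since $\Pr_u[g(x_i,y_i)=z_i]\ge 1/(4\,\rk(g))$), and this $(4\,\rk(g))$ per coordinate is exactly what shows up, squared, in the final density after separating the $X$ and $Y$ sides. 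Your proposal does not account for this conversion at all, which is why the exponent $2^{-2T/s}$ appears without the rank loss.

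\textbf{The biased case is unhandled.} Assuming only that $g$ is nonconstant is not enough: if $g$ is extremely biased, the conditioning $g(x_i,y_i)=z_i$ can be on a vanishingly small set, and the conversion factor above blows up. The paper treats this separately: when $|\E[g]-1/2| > 1/2 - 1/(4\,\rk(g))$, it finds a constant-density monochromatic rectangle directly (a Gavinsky--Lovett-style row/column argument, never touching the cover of $f\circ g$). Your proof plan needs this dichotomy to even define the correlated fixings safely.

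In short: the geometric observation is right and matches the paper, the reduction to an $s$-dimensional subproblem by fixing insensitive coordinates is a reasonable reformulation of the paper's use of $z$, but the core density amplification is missing, the role of $\rk(g)$ is misattributed to a step the paper does not perform, and the biased-$g$ case is omitted.
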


A similar statement was also derived in \cite{IR24b}, where $f$ was assumed to be the parity function on $n$ bits. 
We explain the high-level idea of the proof in a few lines and how it differs from the proof in \cite{IR24b}. Let $u(x,y)$ denote the uniform distribution on $\calX^n\times \calY^n$. A key observation made in \cite{IR24b} is that there exists a rectangle $R$ that is constant for $\oplus\circ g$, a coordinate $i$, inputs $x_1,\ldots,x_{i-1}$ and $y_{i+1},\ldots,y_n$ such that
\begin{enumerate}
\item\label{item:one} $|\mathsf{supp}(u(x_i,y_i\vert x_1,\ldots,x_{i-1},y_{i+1},\ldots,y_n,R))| \geq \Omega(|\calX|\cdot|\calY|\cdot2^{-T/n})$, where $\mathsf{supp}$ denotes the support of a distribution, and 
\item\label{item:two} there exists a bit $b\in \{0,1\}$ so that for every $(x',y') \in R$ consistent with $x_1,\ldots,x_{i-1}$ and $y_{i+1},\ldots,y_n$ we have $\oplus_{j\neq i} g(x_j',y_j') = b$.
\end{enumerate}

Since $R$ is constant for $\oplus \circ g$, \Cref{item:one,item:two} together imply that $g(x_i,y_i)$ is fixed for every $(x_i,y_i)$ in the support of $u(x_i,y_i\vert x_1,\ldots,x_{i-1},y_{i+1},\ldots,y_n,R)$. Moreover, since $x,y$ are independent in $u$, the support is a rectangle.

%In the above discussion, \Cref{item:two} was formulated specifically for the parity function. 
To prove lower bounds for general function composition, we need an appropriate generalization of the parity constraint in \Cref{item:two}. We enforce such a constraint by switching from the uniform distribution to a correlated distribution. %At a high level, this presents another challenge: when $x,y$ are correlated, the support of $p(x_i,y_i\vert R)$ conditioned on a rectangle $R$ need not itself be a rectangle. %Nevertheless, we are able to resolve this by using the fact that $R$ correctly computed   

For simplicity, assume $\mathsf{s}(f) = n$ and that
$g$ is balanced. By definition, there exists $z\in \{0,1\}^n$ with $f(z) \neq f(z_1,\ldots,z_{i-1},1-z_i,z_{i+1},\ldots,z_n)$, for all $i$. Consider the distribution $p(x,y)$ on $\calX^n\times \calY^n$ obtained by sampling each $(x_i,y_i)$ uniformly conditioned on $g(x_i,y_i) = z_i$. Now, we obtain a weaker statement than \Cref{item:one}, which suffices for our purposes. We show that there exists a rectangle $R$ that is constant for $f\circ g$, a coordinate $i$ and inputs $x_1,\ldots,x_{i-1},y_{i+1},\ldots,y_n$ such that the sets
\begin{align*}
    A &:=  \mathsf{supp}(p(x_i\vert x_1,\ldots,x_{i-1},y_{i+1},\ldots,y_n,R)) \text{ and }\\
    B &:= \mathsf{supp}(p(y_i\vert x_1,\ldots,x_{i-1},y_{i+1},\ldots,y_n,R)),
\end{align*}
satisfy $|A| \geq \Omega(|\calX|\cdot 2^{-T/n})$  and $|B| \geq \Omega(|\calY|\cdot 2^{-T/n})$.

Similar to the proof of \Cref{item:one}, this uses the sub-additivity of entropy, we refer the reader to \Cref{claim:subadditivity} for more details. Next, we show that $A\times B$ is a monochromatic rectangle for $g$. We note that \[\mathsf{supp}(p(x_i,y_i\vert x_{1},\ldots,x_{i-1},y_{i+1},\ldots,y_n,R)) \subseteq A\times B,\] and although $g$ is constant on the former set (by the definition of $p$), it is not obvious that the same holds for the $A\times B$. Nevertheless, we show that this is indeed the case. 

The intuition for this is as follows. For any $(x_i,y_i)\in A\times B$, one can use the definitions of $A,B$ and $p(x,y)$, to show that there exist $(x',y')\in R$ satisfying $x'_i=x_i$, $y'_i = y_i$ and $g(x'_j,y'_j) = z_j$, for all $j \neq i$. Therefore, if $g(x_i,y_i) \neq z_i$, then $f\circ g(x',y') \neq f(z)$, by the sensitivity of $f$. This is a contradiction because for every $(x,y)\in R \supseteq \mathsf{supp}(p(x,y\vert R))$, $f\circ g$ evaluates to $f(z)$. %Indeed, for any $(x_i,y_i)\in K$, there exists $(x',y')\in R$ with $x'_{j} = x_{j}$ for all $j< i$ and  $y'_{j} = y_{j}$ for all $j> i$. Hence, $g\circ f(x',y') = g(z_1,\ldots,z_{i-1},f(x_i,y_i),z_{i+1},\ldots,z_n)$. Since $R$ is monochromatic and $R \cap \mathsf{supp}(p(xy)) \neq \emptyset$, every point in $R$ must evaluate to $f(z)$. Therefore, if $f(x_i,y_i) \neq z_i$, then $f\circ g(x',y') \neq f(z)$, by the sensitivity of $f$. %Indeed, if this was not the case, then there must exist inputs $x' \in  \supp(p(x\vert x_1,\ldots,x_{i-1},y_{i+1},\ldots,y_n,R))$ and $y'\in \supp(p(y\vert x_1,\ldots,x_{i-1},y_{i+1},\ldots,y_n,R))$ such that $x'_j = x_j$ for all $j< i$ and $y'_j = y_j$ for all $j> i$. Thus $(x',y') \in R$ and moreover, for all $j\neq i$, $g(x'_j,y'_j) = z_j$. However, if $g(x'_i,y'_i) = 1- z_i$, then by the sensitivity of $f$, we have $f\circ g(x',y') \neq f(z)$. This is a contradiction because for every $(x,y)$ in the support of $p(x,y\vert R)$, it must be that $f\circ g(x,y) = f(z)$ by definition of $p$.

The main difference between the above high-level description and the proof of \Cref{thm:main} is that $g$ need not be balanced. To address this, we consider two cases. First, we assume that $g$ is extremely biased, say $\Pr[g(x,y) = 1] > 1 - 1/(4\cdot \rk(g))$. In this case, we obtain a monochromatic rectangle for $g$ using an observation of Gavinsky and Lovett \cite{GL}, ignoring the cover for $f\circ g$. Otherwise, $g$ is not too biased and we can apply the above discussion, albeit with a loss of $1/(4\cdot \rk(g))$ in the final bound.

\subparagraph*{Organization. } We recall relevant definitions in \Cref{sec:prelim} and prove \Cref{lem:dense-rectangle} in \Cref{sec:lemma-proof}. \Cref{thm:main} and \Cref{cor} are proved in \Cref{sec:main-thm-proof}.

\section{Preliminaries}\label{sec:prelim}
For shorthand, we denote by $[n]$ the set $\{1,\ldots,n\}$. For an element $x\in \calX^n$, we will refer to $x_1,\ldots,x_{i-1}$ by $x_{< i}$ and similarly, we will refer to $x_{i},\ldots,x_n$ by $x_{\geq i}$. Similarly, for a random variable $X$ taking values in $\calX^n$, we use $X_{< i}$ to denote $X_1,\ldots,X_{i-1}$. All logarithms will be taken base 2. Given a distribution $p(x)$ we use $\mathsf{supp}(p(x))$ to denote the set of points in the support of $p(x)$. We recall some relevant mathematical facts, which we will use later.

\begin{definition}[Entropy]
    Given a random variable $A$ distributed according $p(a)$ the entropy of $X$ is given by
    \[H(A) := \E_{p(a)}\bigg[\log \frac{1}{p(a)}\bigg].\]
\end{definition}

\begin{fact}\label{fact:entropy-upper-bound}
    If $A$ has finite support, then $H(A) \leq \log|\mathsf{supp}(p(a))|$, with equality if $p(a)$ is the uniform distribution. 
\end{fact}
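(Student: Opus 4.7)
The plan is to derive the bound as a direct application of Jensen's inequality to the concave function $\log$. First I would rewrite the entropy as
\[H(A) = \E_{p(a)}\!\left[\log \frac{1}{p(a)}\right],\]
where the expectation is over $a$ drawn from $p$; this is well-defined because $p(a) > 0$ on the support by definition and the support is finite by hypothesis. Since $\log$ is concave on $(0,\infty)$, Jensen's inequality yields
\[H(A) \;=\; \E_{p(a)}\!\left[\log \frac{1}{p(a)}\right] \;\leq\; \log \E_{p(a)}\!\left[\frac{1}{p(a)}\right].\]

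Next I would simplify the inner expectation by a telescoping computation:
\[\E_{p(a)}\!\left[\frac{1}{p(a)}\right] \;=\; \sum_{a\in \mathsf{supp}(p(a))} p(a)\cdot\frac{1}{p(a)} \;=\; |\mathsf{supp}(p(a))|.\]
Taking $\log$ of both sides and combining with the previous display gives $H(A) \leq \log|\mathsf{supp}(p(a))|$, which is the desired inequality.

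For the equality case, I would verify by direct substitution: if $p$ is uniform on its support $S$, then each of the $|S|$ nonzero terms in the entropy sum equals $|S|^{-1}\log|S|$, giving $H(A) = \log|S|$. This also matches the Jensen equality condition, which asks that $1/p(a)$ be constant on the support. There is no real obstacle here; the only point to be careful about is ensuring $\log$ is applied to positive quantities, which is automatic once the sum is restricted to $\mathsf{supp}(p(a))$.
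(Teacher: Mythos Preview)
Your proof is correct; the Jensen argument with the concave function $\log$ is the standard route, and your handling of the equality case is fine. The paper itself does not supply a proof of this fact at all --- it is stated as a background preliminary without justification --- so there is nothing to compare against beyond noting that your argument fills in exactly the well-known proof the authors omitted.
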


Given two jointly distributed random variables $A$ and $B$ distributed according to $p(ab)$, the conditional entropy of $B$ given $A$ is defined as 
\begin{equation*}
H(B\vert A) :=  \E_{p(ab)}\bigg[\log\frac{1}{p(b\vert a)}\bigg].
\end{equation*}

It is well-known that $H(B\vert A) \leq H(B)$. We also recall the chain rule for entropy
\begin{align}\label{eq:chain-rule}
H(AB) = H(A) + H(B\vert A).
\end{align}

Next, we recall the notion of KL-divergence.

\begin{definition}[KL-divergence]
Given two probability distributions $p(a)$ and $q(a)$, the \emph{KL-divergence} between $p$ and $q$ is defined as 
\[
    \D(p||q) := \E_{p(a)}\bigg[\log \frac{p(a)}{q(a)}\bigg].
\]
\end{definition}

\begin{fact}\label{claim:KL-non-negative}
    For any two distributions $p(a)$ and $q(a)$, it holds that $ \D(p||q) \geq 0$.
\end{fact}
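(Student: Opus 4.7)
The plan is to derive the inequality from Jensen's inequality applied to the concave function $\log$. Specifically, I would first rewrite
\[-\D(p||q) = \E_{p(a)}\left[\log\frac{q(a)}{p(a)}\right],\]
adopting the standard convention that $0\log(0/q(a)) = 0$ so that only $a$ with $p(a)>0$ contribute to the expectation. Then, since $\log$ is concave on $(0,\infty)$, Jensen's inequality yields
\[\E_{p(a)}\left[\log\frac{q(a)}{p(a)}\right] \leq \log \E_{p(a)}\left[\frac{q(a)}{p(a)}\right] = \log\left(\sum_{a:\, p(a)>0} q(a)\right) \leq \log 1 = 0,\]
where the inner sum is at most $1$ because $q$ is a probability distribution. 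Negating gives $\D(p||q) \geq 0$.

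An equally short alternative is to use the elementary inequality $\ln x \leq x - 1$ (valid for all $x > 0$): applying it with $x = q(a)/p(a)$ and taking expectations under $p$ yields
\[-\D(p||q)\cdot \ln 2 = \E_{p(a)}\left[\ln\frac{q(a)}{p(a)}\right] \leq \sum_{a:\, p(a)>0}(q(a) - p(a)) \leq 0,\]
which again delivers the conclusion. I do not expect any real obstacle; the only mild subtlety is the handling of points where $p(a) = 0$, which is settled by the standard convention and is consistent with the definition in the excerpt, where the expectation is taken under $p$ (so zero-probability atoms contribute nothing regardless).
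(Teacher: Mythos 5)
Your proposal is correct, and your primary argument (Jensen's inequality applied to the concave function $\log$, reducing to $\sum_{a:p(a)>0} q(a) \leq 1$) is essentially identical to the proof sketched in the paper's source. The alternative via $\ln x \leq x - 1$ is a fine, equally standard variant, but the main route matches the paper's.
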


Next, we recall two complexity measures associated with Boolean functions: decision tree complexity and degree. A decision tree of depth $d$ is an adaptive (deterministic) query algorithm, making at most $d$ queries to compute a given function. The algorithm queries variables $x_{i_1},\ldots,x_{i_d}$ adaptively and outputs a bit based on the values of the variables. We say that a decision tree computes a function $f$, if on every input $x$, the algorithm outputs $f(x)$. The decision tree complexity of $f$, denoted $\mathsf{DT}(f)$ is the least depth of a decision tree among those that compute $f$. 
%A decision tree is a rooted binary tree, where each internal vertex is labeled by a variable, each leaf by a bit in $\{0,1\}$, and the two outgoing edges from each vertex are labeled by distinct bits. Moreover, on any root to leaf path no two vertices share the same label. 
We also recall that for every function $f:\{0,1\}^n \to \{0,1\}$, there exists a unique real, multilinear polynomial \[q(x) = \sum_{S\subseteq [n]} c_S\cdot \prod_{i\in S}x_i,\] such that $q(x) = f(x)$ for all $x\in \{0,1\}^n$. The degree of $f$, denoted $\mathsf{deg}(f)$, is the degree of $q$. 

We conclude this section with some facts regarding communication complexity and its connections to the rank of matrices, whose proofs can be found in \cite{RY}. First, we recall that rank is sub-additive.

\begin{fact}\label{fact:rk-subadd}
    For two matrices $A_1$ and $A_2$, we have $\rk(A_1+A_2) \leq \rk(A_1) + \rk(A_2)$.
\end{fact}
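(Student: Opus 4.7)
The plan is to prove subadditivity of rank via column spaces, which is the cleanest route and requires no additional machinery beyond what is standard in linear algebra. I would first recall that the rank of a matrix equals the dimension of its column space (equivalently its row space), so it suffices to bound the dimension of the column space of $A_1 + A_2$.

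The key observation is that each column of $A_1 + A_2$ is literally the sum of the corresponding columns of $A_1$ and $A_2$. Therefore every column of $A_1 + A_2$ lies in $U + V$, where $U$ is the column space of $A_1$ and $V$ is the column space of $A_2$, and so the column space of $A_1 + A_2$ is contained in the subspace $U + V$. Taking dimensions then yields
\[
\rk(A_1 + A_2) \le \dim(U + V).
\]

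To finish, I would invoke the standard inclusion–exclusion style bound $\dim(U + V) = \dim(U) + \dim(V) - \dim(U \cap V) \le \dim(U) + \dim(V)$, which combined with $\dim(U) = \rk(A_1)$ and $\dim(V) = \rk(A_2)$ gives exactly $\rk(A_1 + A_2) \le \rk(A_1) + \rk(A_2)$. The matrices $A_1$ and $A_2$ must of course have the same dimensions for the sum to be defined, but no further hypotheses are needed.

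There is no real obstacle here; the only subtlety worth being explicit about is why the column space of $A_1 + A_2$ sits inside $U + V$, which is immediate from columnwise addition. Because the argument is entirely general, the same proof works over any field, in particular for the real matrices $M_g$ used to define $\rk(g)$ in the communication complexity setting of this paper.
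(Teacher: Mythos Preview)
Your proof is correct and is the standard column-space argument; the paper itself does not give a proof of this fact but simply cites \cite{RY}, so there is no alternative approach to compare against.
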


Next, we note that the communication complexity of a function is at least the logarithm of rank of the corresponding matrix.

\begin{fact}\label{fact:log-rk}
    For any function $g:\calX\times \calY\to \{0,1\}$, we have $D(g) \geq \lceil\log\rk(g)\rceil$.
\end{fact}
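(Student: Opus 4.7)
The plan is to use the standard rectangle-partition view of a deterministic communication protocol together with the sub-additivity of rank already recorded in \Cref{fact:rk-subadd}. Let $c := D(g)$ and fix an optimal deterministic protocol $\Pi$ for $g$ of cost $c$. I would first recall the basic structural fact that the leaves of $\Pi$ induce a partition of $\calX \times \calY$ into at most $2^{c}$ combinatorial rectangles, one per transcript, and that on each such rectangle the value of $g$ is constant (this is the classical ``protocols partition into monochromatic rectangles'' observation, which underlies essentially all of deterministic communication complexity).

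Next I would restrict attention to the $1$-rectangles of this partition, enumerating them as $R_1, \ldots, R_k \subseteq \calX \times \calY$ with $k \leq 2^{c}$. Writing $R_j = A_j \times B_j$ and letting $\mathbf{1}_{R_j}$ denote its characteristic matrix (the $\calX \times \calY$ matrix with entry $1$ on $R_j$ and $0$ elsewhere), the key identity is
\[
M_g \;=\; \sum_{j=1}^{k} \mathbf{1}_{R_j},
\]
since the $R_j$'s partition the set $g^{-1}(1)$ and are disjoint from $g^{-1}(0)$. Each $\mathbf{1}_{R_j}$ is the outer product $\mathbbm{1}_{A_j} \mathbbm{1}_{B_j}^{\transpose}$ of the indicator vectors of $A_j$ and $B_j$, and therefore has rank exactly $1$ (when $R_j$ is nonempty).

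Applying \Cref{fact:rk-subadd} inductively to the above sum gives
\[
\rk(g) \;=\; \rk(M_g) \;\leq\; \sum_{j=1}^{k} \rk(\mathbf{1}_{R_j}) \;\leq\; k \;\leq\; 2^{c}.
\]
Taking logarithms yields $c \geq \log \rk(g)$, and since $c = D(g)$ is a nonnegative integer, this immediately strengthens to $D(g) \geq \lceil \log \rk(g) \rceil$, as desired. There is no real obstacle here: the only step that uses any machinery at all is the sub-additivity of rank, which is invoked as \Cref{fact:rk-subadd}; everything else is direct from the definitions of a deterministic protocol, a rectangle, and the matrix $M_g$ introduced earlier in the paper.
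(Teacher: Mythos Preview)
Your argument is correct and is exactly the standard proof of this fact. The paper does not give its own proof here---it simply records \Cref{fact:log-rk} as a well-known fact with a reference to \cite{RY}---and the argument you wrote (partition into at most $2^{D(g)}$ monochromatic rectangles, each $1$-rectangle has rank one, apply sub-additivity of rank) is precisely the textbook derivation found in that reference.
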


Lastly, we need the fact that a protocol with a small number of leaves can be simulated by a short protocol (see \cite{RY} Theorem 1.7). 

\begin{fact}\label{fact:rebalancing}
    Given a protocol $\pi$ with $\ell$ leaves, there exists a protocol with communication at most $\lceil2 \log_{3/2}\ell\rceil$ that outputs $\pi(x,y)$ on inputs $x$ and $y$.
\end{fact}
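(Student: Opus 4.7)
The plan is to prove \Cref{fact:rebalancing} by induction on $\ell$, using the classical balanced-separator trick on the protocol tree of $\pi$. Let $T$ denote the protocol tree of $\pi$, so $T$ has $\ell$ leaves and each internal node is owned by Alice or Bob. The first step is to locate an internal node $v$ whose subtree $T_v$ contains strictly between $\ell/3$ and $2\ell/3$ leaves; such a $v$ is found by starting at the root and repeatedly descending into the child whose subtree has the most leaves, stopping the first time this subtree drops to at most $2\ell/3$ leaves. Since the parent had more than $2\ell/3$ leaves and at most one of its children could have at most $\ell/3$ leaves, the chosen $v$ also has more than $\ell/3$ leaves in its subtree.

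With $v$ in hand, I would construct the simulating protocol $\pi'$ as follows. Consider the path from the root of $T$ to $v$; each edge is labeled by a single message that the owner of the corresponding node would have sent. The key observation is that whether a given input $x$ is consistent with Alice's edges on this path depends \emph{only} on $x$, because at each of Alice's nodes her outgoing message is determined by $x$ together with the labels already committed to higher up on the path; the analogous statement holds for $y$ and Bob's edges. Thus in $\pi'$, Alice sends a single bit indicating whether she is consistent with the path, Bob sends a single bit likewise, and the two bits together determine whether $(x,y)$ would reach $v$ in $\pi$. If both bits are $1$, the players recurse inside $T_v$, which has at most $2\ell/3$ leaves. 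Otherwise, $(x,y)$ diverges from the root-to-$v$ path at some edge and the players recurse on the tree obtained from $T$ by replacing the entire subtree $T_v$ with a single dummy leaf (whose output label is irrelevant, since this leaf is never actually reached in this branch); since $T_v$ contributed more than $\ell/3$ leaves to $T$, the contracted tree also has at most $2\ell/3$ leaves.

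Letting $C(\ell)$ denote the worst-case communication of the rebalanced protocol, the above construction yields the recursion $C(\ell) \leq 2 + C(\lceil 2\ell/3 \rceil)$ with $C(1) = 0$, which unrolls to $C(\ell) \leq \lceil 2\log_{3/2} \ell \rceil$, matching the claimed bound. The main step requiring care is the local-simulation argument producing the two one-bit tests: one must verify that each player can decide consistency of their own input with their own edges on the root-to-$v$ path \emph{without} knowing the other player's input, and that the logical AND of the two answers is exactly the predicate ``$(x,y)$ reaches $v$ in $\pi$''. The remaining routine bookkeeping is tracking the $+1$ introduced by the dummy leaf and confirming that the induction invariant is maintained, both of which are absorbed by the ceilings in the final bound.
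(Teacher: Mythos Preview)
The paper does not give its own proof of this fact; it simply cites Theorem~1.7 of Rao--Yehudayoff. Your argument is exactly the standard protocol-balancing proof found there: locate a $\tfrac13$--$\tfrac23$ separator node $v$, use the rectangle property so that each player can test with a single bit whether their input is consistent with the root-to-$v$ path, and recurse on either $T_v$ or the contracted tree. The one place to tighten is the final arithmetic: the recursion $C(\ell)\le 2+C(\lceil 2\ell/3\rceil)$ does not literally unroll to $\lceil 2\log_{3/2}\ell\rceil$ for every $\ell$ (try $\ell=4$, where the inductive step gives $2+\lceil 2\log_{3/2}3\rceil=8>7=\lceil 2\log_{3/2}4\rceil$), so you should either carry an additive $O(1)$ or handle the small base cases directly rather than asserting that the ceilings absorb the dummy-leaf $+1$.
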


\section{Proof of \Cref{lem:dense-rectangle}}\label{sec:lemma-proof}
    First, assume that $\abs{\E_{x,y}[g(x,y)] -1/2} > 1/2 -  1/(4\cdot \rk (f))$. In this case, we claim $g$ contains a monochromatic rectangle of constant density. Indeed, 
    \begin{align*}
        \abs{\E_{x,y}[g(x,y)] - \frac{1}{2}} = \max\bigg\{\Pr_{x,y}[g(x,y) = 1] - \frac{1}{2},\Pr_{x,y}[g(x,y) = 0] - \frac{1}{2} \bigg\},
    \end{align*}
    and we can assume without loss of generality that 
    \[\Pr_{x,y}[g(x,y) = 1] > 1 - \frac{1}{4\cdot \rk(g)}.\]
    Let $E$ be the set of all $x$ such that  $\Pr_y[g(x,y) = 1] \leq 1 - 1/(2\cdot \rk(g))$. We have,
    \begin{align*}
        \Pr_{x,y}[g(x,y) = 1] &\leq  \Pr_x[x\in E]\cdot \bigg(1 -\frac{1}{2\cdot \rk(g)}\bigg) + \Pr_x[x\notin E] \\
        &= 1 - \frac{\Pr_x[x\in E]}{2\cdot \rk(g)},
    \end{align*}
    which implies that $\Pr_{x}[x\in E] \leq 1/2$. Let $x_1,\ldots,x_r\in E^c$ be such that the corresponding rows are maximally linearly independent in $M_g$. Moreover, let $G = \{y: f(x_i,y) =1,  \forall i\in [r]\}$. By a union bound, we have \[\Pr_y[y\notin G]  \leq r\cdot \frac{1}{2\rk(g)} \leq \frac{1}{2}.\] We observe that $E^c\times G$ is a monochromatic rectangle of density at least $1/4$. Since $2^{-2T/s}\cdot(4\cdot\rk(g))^{-2} \leq 1/16 < 1/4$ we have found a monochromatic rectangle of the desired density.

    Next, suppose that $\abs{\E_{x,y}[g(x,y)] - 1/2} < 1/2 -  1/(4\cdot \rk (f))$. We have
    \begin{align}\label{eq:balanced}
        \frac{1}{4\cdot\rk(g)} \leq \Pr_{x,y}[g(x,y) = 0], \Pr_{x,y}[g(x,y) = 1] < 1 - \frac{1}{4\cdot\rk(g)}.
    \end{align}
    
    For shorthand let $s$ be the sensitivity of $g$. By definition, there exists an input $z\in \{0,1\}^n$ and a set $S\subseteq [n]$ such that 
    \[f(z) \neq f(z_{< i},1 - z_i,z_{> i}).\] 
    We may assume without loss of generality that $S \supseteq [s]$, for otherwise, this can be ensured by renaming the coordinates. Let $u(x,y)$ denote the uniform distribution over all inputs $(x,y)\in \calX^n\times \calY^n$, and let $p(x,y)$ be a distribution obtained by sampling each $(x_i,y_i)$ randomly and independently subject to $g(x_i,y_i) = z_i$. By \Cref{eq:balanced} we have the following inequality relating the two distributions:
    \begin{align}
        \max_{x,y} \frac{p(x_{\leq s},y_{\leq s})}{u(x_{\leq s},y_{\leq s})} &= \max_{xy} \prod_{i\leq s}\frac{u(x_i,y_i\vert g(x_i,y_i) = z_i)}{u(x_i,y_i)} \nonumber \\
        &= \prod_{i\leq s} \frac{1}{\Pr_{u(x_i,y_i)}[g(x_i,y_i) = z_i]} \leq  (4\cdot \rk(g))^{s}. \label{obs:max-bound}
    \end{align}

    Recall that $f\circ g$ can be covered with at most $2^T$ monochromatic rectangles, say $R_1,\ldots,R_{2^T}$. Thus, there exists a rectangle $R$ in the cover with $p(R) \geq 2^{-T}$. 
    Let $X$ and $Y$ be random variables denoting rows and columns of $\calX^n$ and $\calY^n$ respectively, where $XY$ is distributed according to $p(x,y\vert R)$.
    \begin{claim}\label{claim:subadditivity}
        \[\sum_{i\in [s]} H(X_i\vert X_{< i},X_{> s},Y_{> i}) + H(Y_i\vert X_{< i},X_{> s},Y_{> i}) \geq s\log\frac{|\calX|\cdot|\calY|}{(4\cdot\rk(g))^2} - 2T.  \]
    \end{claim}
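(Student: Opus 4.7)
The plan is to relax each term of the sum to a chain-rule-compatible form by adding variables to the conditioning, and then telescope.

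The first observation I would record is that, since $p$ is uniform on its support $\supp(p) = \{(x,y) : g(x_i,y_i) = z_i \text{ for all } i\}$ and the rectangle $R$ satisfies $p(R) \geq 2^{-T}$, the distribution $p(\cdot \mid R)$ governing $(X,Y)$ is uniform on $R \cap \supp(p)$, a set of size at least $2^{-T}\abs{\supp(p)}$. Since the $(X_i,Y_i)$'s are independent under $p$,
\[H(X,Y) \geq \log\abs{\supp(p)} - T = \sum_{i=1}^n \log\abs{g^{-1}(z_i)} - T.\]

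Next, since conditioning never increases entropy, for each $i \leq s$ I have
\begin{align*}
H(X_i \mid X_{<i}, X_{>s}, Y_{>i}) &\geq H(X_i \mid X_{<i}, X_{>s}, Y),\\
H(Y_i \mid X_{<i}, X_{>s}, Y_{>i}) &\geq H(Y_i \mid X, Y_{>i}),
\end{align*}
where I have added $Y_{\leq i}$ to the first conditioning and $X_i,\ldots,X_s$ to the second. The two resulting families telescope via the chain rule:
\begin{align*}
\sum_{i=1}^s H(X_i \mid X_{<i}, X_{>s}, Y) &= H(X_{\leq s} \mid X_{>s}, Y) = H(X,Y) - H(X_{>s}, Y),\\
\sum_{i=1}^s H(Y_i \mid X, Y_{>i}) &= H(Y_{\leq s} \mid X, Y_{>s}) = H(X,Y) - H(X, Y_{>s}).
\end{align*}

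To upper bound the marginal entropies, I use that the support of the $(X_{>s},Y)$-marginal of $p(\cdot\mid R)$ is contained in that of $p$, which factorizes as $\prod_{i>s}\supp(p(X_i,Y_i)) \times \prod_{i\leq s}\supp(p(Y_i))$ by the coordinate-independence under $p$. Since $\abs{\supp(p(Y_i))} \leq \abs{\calY}$,
\[H(X_{>s}, Y) \leq \sum_{i>s}\log\abs{g^{-1}(z_i)} + s\log\abs{\calY},\]
and symmetrically $H(X, Y_{>s}) \leq \sum_{i>s}\log\abs{g^{-1}(z_i)} + s\log\abs{\calX}$. Combining these with the lower bound on $H(X,Y)$, the LHS of the claim is at least $2H(X,Y) - H(X_{>s},Y) - H(X,Y_{>s})$, which, using the balanced assumption $\abs{g^{-1}(z_i)} \geq \abs{\calX}\abs{\calY}/(4\rk(g))$, yields
\begin{align*}
\text{LHS} &\geq 2\sum_{i\leq s}\log\abs{g^{-1}(z_i)} - s\log(\abs{\calX}\abs{\calY}) - 2T \geq s\log\frac{\abs{\calX}\abs{\calY}}{(4\rk(g))^2} - 2T.
\end{align*}

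The main obstacle is the mixed conditioning $(X_{<i}, X_{>s}, Y_{>i})$, which combines past $X$'s with future $Y$'s and therefore does not match any single chain-rule decomposition of a joint entropy. The key move is to absorb the missing variables into each term's conditioning — at zero cost in our direction of inequality — so that the $X_i$-terms and $Y_i$-terms separately telescope into clean joint-minus-marginal expressions under $p(\cdot\mid R)$.
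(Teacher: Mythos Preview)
Your proof is correct. The opening move---relaxing each $H(X_i\mid X_{<i},X_{>s},Y_{>i})$ to $H(X_i\mid X_{<i},X_{>s},Y)$ and each $H(Y_i\mid X_{<i},X_{>s},Y_{>i})$ to $H(Y_i\mid X,Y_{>i})$, then telescoping to $H(X\mid X_{>s},Y)+H(Y\mid X,Y_{>s})$---is exactly what the paper does.

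Where you diverge is in how you lower-bound these two conditional entropies. The paper expands $H(X\mid X_{>s},Y)$ directly as an expectation, substitutes $p(R)\geq 2^{-T}$ and the pointwise ratio bound $p(x_{\leq s},y_{\leq s})\leq (4\rk(g))^s\cdot u(x_{\leq s},y_{\leq s})$, and identifies the residual as a KL divergence $\D(p(x_{>s},y\mid R)\,\|\,p'(x_{>s},y))\geq 0$. You instead write $H(X\mid X_{>s},Y)=H(X,Y)-H(X_{>s},Y)$, then exploit that $p$ is \emph{uniform} on its support so that $p(\cdot\mid R)$ is uniform on $R\cap\supp(p)$; this turns the lower bound on $H(X,Y)$ and the upper bounds on $H(X_{>s},Y)$, $H(X,Y_{>s})$ into pure support-counting. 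Your route is more elementary---no divergence needed---but leans on the uniformity of $p$, whereas the paper's KL argument would go through for any $p$ satisfying the max-ratio bound. In the present setting the two are equivalent and yield the identical final expression.
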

    \begin{proof}
    Applying the chain rule for entropy we get
    \begin{align*}
        &\sum_{i\in [s]} H(X_i\vert X_{< i},X_{> s},Y_{> i}) + H(Y_i\vert X_{< i},X_{> s},Y_{> i}) \\&\geq \sum_{i\in [s]} H(X_i\vert X_{< i},X_{> s},Y) + H(Y_i\vert X,Y_{> i}) 
        = H(X\vert X_{> s},Y) + H(Y\vert X,Y_{> s}).
    \end{align*}
    Let $p'(x,y)$ be the distribution obtained by sampling $(x_j,y_j)$ uniformly at random, for each $j\in [s]$, and according to $p(x_j,y_j)$ for each $j\notin S$. Using this notation, we bound the term $H(X\vert X_{> s},Y)$ above as follows  
    \begin{align*}
    &H(X\vert Y,X_{> s}) \\&= \E_{p(x,y\vert R)}\bigg[\log\frac{1}{p(x\vert y,x_{> s},R)}\bigg] \\
    &= \E_{p(x,y\vert R)}\bigg[\log\frac{p(R)\cdot p(x_{> s},y\vert R)}{p(x,y)}\bigg] \\
    &\geq \E_{p(x,y\vert R)}\bigg[\log\frac{p(x_{> s},y\vert R)}{(4\cdot\rk(g))^s\cdot u(x_{\leq s},y_{\leq s})\cdot p(x_{> s},y_{> s})}\bigg] - T \tag{\Cref{obs:max-bound} and $p(R)\geq 2^{-T}$}\\
    &= \E_{p(x,y\vert R)}\bigg[\log\frac{|\calX|^s\cdot p(x_{> s},y\vert R)}{p'(x_{> s},y)}\bigg] - T - s\log(4\cdot\rk(g))\\
    &= s\log \frac{|\calX|}{4\cdot\rk(g)} + \D(p(x_{>s},y\vert R)||p'(x_{> s},y)) - T \geq s\log \frac{|\calX|}{4\cdot\rk(g)} -T,
    \end{align*}    
    which follows by \Cref{claim:KL-non-negative}. A similar calculation shows that $H(Y\vert X,Y_{> s}) \geq s\log \frac{|\calY|}{4\cdot\rk(g)} - T$, yielding the desired bound.
    \end{proof}
    
    By an averaging argument, we obtain an index $i\in[s]$ and $x_{< i},x_{>s},y_{> i}$ such that
    \begin{align*}
        H(X_i\vert x_{< i},x_{>s},y_{> i}) + H(Y_i\vert x_{< i},x_{>s},y_{> i}) \geq \log\frac{|\calX|\cdot|\calY|}{(4\cdot\rk(g))^2} - \frac{2T}{s}.
    \end{align*}
    For shorthand, let \[A := \mathsf{supp}(p(x_i\vert x_{<i},x_{>s},y_{>i},R)) \text{ and } B:= \mathsf{supp}(p(y_i\vert x_{<i},x_{>s},y_{>i},R)).\]
    
    Using \Cref{fact:entropy-upper-bound} we conclude that the rectangle given by $A\times B$ has size at least 
    \[\frac{|\calX|\cdot |\calY|}{16\cdot\rk(g)^2\cdot 2^{2T/s}}. \]
    
    Moreover, we claim that $A\times B$ is monochromatic for $g$.
    Indeed, for any $x_i\in A$, there exists a row $x'\in \mathsf{supp}(p(x\vert x_{< i},x_{> s},y_{> i},R))$ such that $x'_{i} = x_i$. Similarly, for any $y_i\in B$, there exists a column $y'\in \mathsf{supp}(p(y\vert x_{< i},x_{> s},y_{> i},R))$ such that $y'_i = y_i$. In particular, $(x',y')\in R$ and in addition, $x'_{j} = x_j$ for all $j< i$ and $y'_j = y_j$ for all $j> i$. %Moreover, since $x'$ is in the support of $p(x\vert x_{< i},x_{> s},y_{> i},R)$ we have $g(x'_t,y_t) = z_t$ for all $t > i$. Similarly, we have $g(x_t,y'_{t}) = z_t$ for all $t< i$, and 
    
    Since $y'\in \supp(p(y\vert x_{<i},x_{> s},y_{> i},R))$, we get $g(x'_t,y'_t) = g(x_t,y'_t)= z_t$ for all $t < i$. Similarly, we have $g(x'_t,y'_t) = g(x'_t,y_t)= z_t$ for all $t > i$. Since 
    $i\in [s]$, if $g(x_i',y_i') \neq z_i$, then $f\circ g(x',y') = f(z_{<i},1- z_i,z_{> i})\neq f(z)$. However, this contradicts the fact that $R$ is monochromatic for $f\circ g$ because for every $x,y \in \mathsf{supp}(p(x,y\vert R))$, we know that $f\circ g(x,y) = f(z)$. 

\section{Proof of \Cref{thm:main}}\label{sec:main-thm-proof}
    At a high level, we apply \Cref{lem:dense-rectangle} repeatedly to find dense monochromatic rectangles and combine this with the arguments of \cite{NW} to obtain a protocol for $g$. 

    Fix some two functions $f$ and $g$ and consider any cover of $f\circ g$ with some $2^T$ monochromatic rectangles. For shorthand, let $s$ denote the sensitivity of $f$. Applying \Cref{lem:dense-rectangle}, we obtain a monochromatic rectangle $R$ in $M_g$ with density $2^{-2T/s}\cdot(4\cdot\rk(g))^{-2}$. 
    
    By renaming the rows and columns of $M_g$ appropriately, we can rewrite it as
    \[\begin{bmatrix}R & A\\
    B & Z\end{bmatrix},\]
    for some matrices $A,B$ and $Z$. Now, we observe that 
    \begin{align}\label{eq:rank-decrement}
    \min\Bigg\{\rk\bigg(\begin{bmatrix}R & A\end{bmatrix}\bigg),  \rk\Bigg(\begin{bmatrix}R\\ B\end{bmatrix}\Bigg) \Bigg\}\leq  \frac{\rk(g) + 3}{2}.
    \end{align}
    Since $R$ has rank one we get 
    \begin{align*}
       \rk\bigg(\begin{bmatrix}R & A\end{bmatrix}\bigg) +   \rk\Bigg(\begin{bmatrix}R\\ B\end{bmatrix}\Bigg) 
       &\leq \rk(A) + \rk(B) +2 \tag{by \Cref{fact:rk-subadd}} \\
       &\leq \rk\Bigg(\begin{bmatrix}0 & A\\
    B & Z\end{bmatrix}\Bigg) +2 \tag{Gaussian Elimination}\\ 
    &\leq \rk\bigg(\begin{bmatrix}R & A\\
    B & Z\end{bmatrix}\bigg) + 3 \tag{by \Cref{fact:rk-subadd}}\\
    &= \rk(g) + 3, 
    \end{align*}
    and \Cref{eq:rank-decrement} follows.

    If $\rk\Big(\begin{bmatrix}R & A\end{bmatrix}\Big) \leq (\rk(g)+3)/2$, Alice sends a bit to Bob indicating whether or not her input is consistent with the rows of $R$. Otherwise, Bob sends a bit to Alice indicating whether or not his input is consistent with the columns of $R$. We can assume without loss of generality that $\rk\Big(\begin{bmatrix}R & A\end{bmatrix}\Big) \leq (\rk(g)+3)/2$ as the proof is symmetric.  
    
    Let $g'$ and $g''$ denote the functions encoded by the matrices $\begin{bmatrix}R & A\end{bmatrix}$ and  $\begin{bmatrix}B & Z\end{bmatrix}$ respectively. We note that a cover of $M_{f\circ g}$ also gives a cover of both $M_{f\circ g'}$ and $M_{f\circ g''}$.    
    If Alice's input is consistent with the rows of $R$, the players repeat the above argument using the rectangle cover for $M_{f\circ g'}$.  Otherwise, they repeat the argument using the rectangle cover for $M_{f\circ g''}$. In the former case, we have $\rk(g') \leq (\rk(g)+3)/2$ and in the latter case, the size of $\calX\times\calY$ shrinks by a factor of $1 - 2^{-2T/s}\cdot(4\cdot\rk(g))^{-2}$. 
    
    We claim that after  $(4\cdot \rk(g))^3\cdot 2^{2T/s} + O(\log \rk(g))$ recursive steps either the rank is at most $5$ or the size of the matrix is at most $1$. Indeed, as long as the $\rk(g) \geq 5$,  we have $\rk(g') \leq (\rk(g) + 3)/2 \leq 4\cdot \rk(g)/5$. Hence, there can only be  after $\log_{5/4} \rk(g)$ many steps where the rank reduces by a factor of $4/5$. Similarly, there can be only $k = (4\cdot \rk(g))^3\cdot 2^{2T/s}$ many steps where the size of the matrix reduces, since 
    \begin{align*}
        \bigg(1 - \frac{1}{2^{2T/s} (4\cdot\rk(g))^{2}}\bigg)^{k} &\leq \exp\Big(- \frac{k}{2^{2T/s}(4\cdot \rk(g))^2}\Big)
        = e^{-4\cdot\rk(g)}
        \leq \frac{1}{|\calX|\cdot |\calY|},
    \end{align*}
    where we used the fact that $|\calX|$ and $|\calY|$ are both at most $2^{\rk(g)}$ in the last step. 

    Every leaf of this protocol either corresponds to a size $1$ matrix or a matrix of rank at most $5$. Thus, with constantly more bits of communication, we get a protocol for $g$ with the following upper bound on the number of leaves:
    \begin{align*}
    \binom{(4\cdot \rk(g))^3\cdot 2^{2T/s} + O(\log \rk(g))}{O(\log \rk(g))}\cdot O(1) 
    &\leq O(\rk(g)^3\cdot 2^{2T/s})^{O(\log\rk(g))}
    \\
    &\leq 2^{O((T/s + \log\rk(g))\cdot \log\rk(g))},
    \end{align*}
    where all the inequalities hold for $c_0$ large enough.
    
    By \Cref{fact:rebalancing} the above protocol can be rebalanced to have communication at most \[O\bigg(\bigg(\frac{T}{s}+ \log\rk(g)\bigg)\cdot \log\rk(g)\bigg).\]
    Since $g$ requires communication at least $D(g)$,we have \begin{align*}
        \bigg(\frac{T}{s} + \log\rk(g)\bigg)\cdot \log\rk(g) \geq \Omega(D(g)), 
    \end{align*}
    and the theorem follows by rearranging.

    \subsection{Proof of \Cref{cor}}
    We start with the following claim relating $\rk(g)$ and $\rk(f\circ g)$.

    \begin{lemma}\label{lemma:rank-lb} For any two functions $f:\{0,1\}^n\to \{0,1\}$ and $g:\calX\times\calY\to \{0,1\}$, it holds that $\rk(f\circ g) \geq (\rk(g)-1)^{\mathsf{deg}(f)}$.
\end{lemma}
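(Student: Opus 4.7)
The plan is to induct on $d = \mathsf{deg}(f)$, peeling off one variable at a time via the discrete derivative of $f$. To make the recursion go through cleanly, I will prove the statement for arbitrary real multilinear polynomials $p : \R^n \to \R$ of degree $d$, since the discrete derivatives appearing in the induction take real values outside of $\{0,1\}$.

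The central algebraic tool will be the following sub-claim: for any matrices $A, B$ of the same shape and any $M_g$ with $\rk(M_g) = r$,
\[\rk(J \otimes A + M_g \otimes B) \;\geq\; (r - 1) \cdot \rk(B),\]
where $J$ is the all-ones matrix of the same shape as $M_g$. I would prove this by examining the column space: each column, indexed by $(y, j)$, equals $\one_\calX \otimes A_j + g_y \otimes B_j$, where $g_y(x) := M_g(x, y)$ and $A_j, B_j$ are the $j$-th columns of $A, B$. Fixing any $y^* \in \calY$ and taking pairwise differences of columns kills the $J \otimes A$ summand, so $(g_y - g_{y^*}) \otimes B_j$ lies in the column space for every $y, j$. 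A short dimension argument shows that $\spn\{g_y - g_{y^*} : y \in \calY\}$ has dimension at least $r - 1$, and tensoring with the $\rk(B)$-dimensional space $\spn\{B_j\}$ produces a subspace of the column space of dimension at least $(r-1) \cdot \rk(B)$.

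Granted the sub-claim, the induction is short. The base case $d = 0$ is immediate: $p$ is a nonzero constant, so $\rk(M_{p \circ g}) = 1 = (r-1)^0$. For the inductive step $d \geq 1$, choose a coordinate (WLOG coordinate $1$) that appears in some degree-$d$ monomial of $p$, and write
\[p(z) \;=\; q(z_{\geq 2}) + z_1 \cdot h(z_{\geq 2}), \qquad q := p|_{z_1 = 0}, \quad h := p|_{z_1 = 1} - p|_{z_1 = 0}.\]
The choice of coordinate ensures $\mathsf{deg}(h) = d - 1$, since the degree-$d$ monomial of $p$ containing $z_1$ becomes a degree-$(d-1)$ monomial of $h$, and higher-degree monomials of $h$ would force $\mathsf{deg}(p) > d$. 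Splitting $M_{p \circ g}$ on the first coordinate gives the tensor decomposition
\[M_{p \circ g} \;=\; J \otimes M_{q \circ g_{\geq 2}} + M_g \otimes M_{h \circ g_{\geq 2}},\]
so the sub-claim together with the inductive hypothesis on $h$ yields $\rk(M_{p \circ g}) \geq (r - 1) \cdot \rk(M_{h \circ g_{\geq 2}}) \geq (r - 1)^d$, as desired. The lemma follows by specializing $p$ to the multilinear extension of $f$.

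The main obstacle will be the sub-claim: the key is to notice that taking differences of columns cancels the $J \otimes A$ contribution and exposes a ``difference space'' of dimension only one less than $\rk(M_g)$, which then combines multiplicatively with $\rk(B)$ via the tensor product. Once that observation is in hand, the discrete-derivative recursion and the strengthening to general real multilinear polynomials are essentially bookkeeping.
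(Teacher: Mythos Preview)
Your proof is correct and takes a genuinely different route from the paper's. The paper gives a one-shot projection argument: it fixes a top-degree monomial (say $[d]$) of $f$, takes linearly independent rows $u_1,\ldots,u_r$ of $M_g$, projects each $u_j$ orthogonally to the all-ones vector to get $\tilde u_j$, and then shows that the rows of $M_{f\circ g}$, when projected onto the space $\calV=\spn\{\tilde u_{j_1}\otimes\cdots\otimes\tilde u_{j_d}\otimes\mathbf{1}^{\otimes(n-d)}\}$, span all of $\calV$. The point is that every monomial $S\neq[d]$ of $f$ dies under this projection, so only the top monomial survives and its contribution is exactly the desired tensor.

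You instead induct on the degree via the discrete derivative $p=q+z_1 h$, reducing to your sub-claim $\rk(J\otimes A+M_g\otimes B)\ge(r-1)\cdot\rk(B)$. Both arguments rest on the same core observation---subtracting off the all-ones direction (your column differences, the paper's $\tilde u_j$) leaves an $(r-1)$-dimensional space per coordinate, and these combine multiplicatively under the tensor. What your approach buys is modularity: the sub-claim is a clean standalone fact, and the recursion makes transparent why degree is the right parameter; the small price is having to strengthen the statement to real multilinear polynomials to accommodate the non-Boolean derivative $h$. The paper's approach avoids that strengthening and exhibits the witnessing subspace explicitly in one step, at the cost of a slightly more delicate projection bookkeeping.
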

\begin{proof}
    For shorthand, denote by $d$, the degree of $f$. By definition, there exists a subset of size $d$ whose corresponding coefficient in the polynomial expansion of $f$ is non-zero. We can assume without loss of generality that this set is $[d]$, otherwise, we can rename the variables to ensure this. 
    Let $u_1,\ldots,u_{r}$ be a maximal set of linearly independent rows of $M_g$, and let $x_1,\ldots,x_r$ be the corresponding inputs. Further, define the vectors $\tilde u_1,\ldots,\tilde u_r$, where $\tilde u_i$ is the projection of $u_i$ onto the space orthogonal to the all-ones vector, $\mathbf 1$. We note that the dimension of $\mathsf{span}(\tilde u_1,\ldots,\tilde u_r)$ is at least $r-1$.

    In what follows, we adopt the following notation for the tensor product of 2 (or more vectors). Given two vectors $u\in \R^m$ and $v\in \R^k$, we denote the tensor product of $u$ with $v$ by $u\otimes v \in \R^{mk}$  where $u\otimes v[i,j] = u(i)\cdot v(j)$. 
    
    The key observation is that the projection of the rows of $M_{f\circ g}$ to the space 
    \[\calV := \mathsf{span}(\{v_1\otimes\ldots\otimes v_d\otimes \underbrace{\mathbf{1} \otimes \ldots \otimes \mathbf{1}}_{n-d \text{ times}}: v_i \in \{\tilde u_1,\ldots,\tilde u_r\}\})\]
    has full rank. Indeed, consider any function $h:[n]\to [r]$ and let $u_h$ be the row corresponding to the inputs $x_{h(1)},\ldots,x_{h(n)}$. For any $y_1,\ldots,y_n$, using the multilinear polynomial for $f$ we can write
    \begin{align*}
        u_h(y_1,\ldots,y_n) &= f(g(x_1,y_1),\ldots,f(x_n,y_n))\\
        &= \sum_{S\subseteq [n]}\alpha_S\cdot \prod_{i\in S} g(x_i,y_i) \\
        &= \sum_{S\subseteq [n]}\alpha_S\cdot \prod_{i\in S} u_{h(i)}(y_i). 
    \end{align*}
    
    For any set $S$, the last quantity above can be written as tensor product. For example, if we let $S = [t]$ then 
    \begin{align*}
    \prod_{i\in [t]} u_{h(i)}(y_i) &= u_{h(1)}\otimes \ldots u_{h(t)}[y_1,\ldots,y_t] \\
    &= u_{h(1)}\otimes \ldots u_{h(t)}\otimes \underbrace{\mathbf{1} \otimes \ldots \otimes \mathbf 1}_{n-t \text{ times}} [y_1,\ldots,y_n].
    \end{align*}
    Applying this to a general set $S$, we can write
    \[ \prod_{i\in S} u_{h(i)}(y_i) = \otimes_{i\in S} u_{h(i)}\otimes_{i\notin S} \mathbf{1}[y_1,\ldots,y_n],\]
    where the subscript is used to denote the vector in the $i$-th coordinate of the tensor product depending on whether or not $i\in S$.

    For any set $S \neq [d]$ of size at most $d$, the projection of $\otimes_{i\in S} u_{h(i)}\otimes_{i\notin S} \mathbf{1}$ onto $\calV$ is zero, since there exists $i\in [d]\setminus S$ such that the vector in the $i$-th coordinate of the tensor product is $\mathbf 1$. 
    Moreover, by the definition of degree, $\alpha_S = 0$ for sets $S$ of size larger than $d$. 
    Lastly, the projection of $\alpha_{[d]}\cdot \otimes_{i\in [d]} u_{h(i)}\otimes_{i=d+1}^n\mathbf{1}$ is exactly $\alpha_{[d]}\cdot\otimes_{i\in [d]} \tilde u_{h(i)}\otimes_{i=d+1}^n\mathbf{1}$.
    
    This establishes that the projection of the rows of $M_{f\circ g}$ to $\calV$ has full rank. It follows the rank of $M_{f\circ g}$ is at least the dimension of $\calV$, which is at least $(r-1)^{d} = (\rk(g)-1)^{\mathsf{deg}(f)}$.
\end{proof}
    
    \begin{proof}[Proof of \Cref{cor}]
        Recalling \Cref{fact:log-rk}, we have $D(f\circ g) \geq \lceil \log \rk(f\circ g)\rceil$. Therefore, we can put together \Cref{thm:main} and \Cref{lemma:rank-lb} to conclude
        \begin{align*}
            D(f\circ g) \geq \max\bigg\{\mathsf{s}(f)\cdot\bigg(\frac{\Omega(D(g))}{\log \rk(g)} - \log\rk(g)\Bigg), \mathsf{deg}(f)\cdot\log(\rk(g) -1)\bigg\}.
        \end{align*}
        Using the fact that $\max\{a,b\} \geq \lambda\cdot a + (1-\lambda)\cdot b$, for any $\lambda \in [0,1]$, we can set $\lambda = \mathsf{deg}(f)/(2\mathsf{s}(f) +\mathsf{deg}(f))$ to get
        \begin{align*}
            D(f\circ g) &\geq \frac{\mathsf{s}(f)\cdot \mathsf{deg}(f)}{2\mathsf{s}(f) + \mathsf{deg}(f)}\cdot\bigg(\frac{\Omega(D(g))}{\log\rk(g)} - \log\rk(g) + 2\log(\rk(g)-1)\bigg)\\
            &\geq \frac{\mathsf{s}(f)\cdot \mathsf{deg}(f)}{2\mathsf{s}(f) + \mathsf{deg}(f)}\cdot\Omega\bigg(\frac{D(g)}{\log\rk(g)} + \log\rk(g)\bigg),
        \end{align*}
        where we used the fact that for $c_0$ large enough, $(\rk(g) -1)^2 \geq \rk(g)^{3/2}$.
    \end{proof}

    \section{Concluding Remarks}
    At a high level, we try to understand the communication complexity of computing $f\circ g$ for arbitrary $f$ and $g$. We expect that this is $\Omega(\mathsf{DT}(f)\cdot D(g))$, for any sufficiently complex gadget $g$. \Cref{cor} gives the lower bound $\Omega(\min\{\mathsf{s}(f),\mathsf{deg}(f)\}\cdot \sqrt{D(g)}) \geq \Omega(\mathsf{DT}(f)^{1/6}\cdot \sqrt{D(g)})$ which can be seen as progress towards this. Below, we show that for some gadgets this can be further improved. 
    \begin{enumerate}
        \item[1. ] For certain gadgets $g$, we can obtain $D(f\circ g)\geq \Omega(\mathsf{DT}(f)^{1/3}\cdot \sqrt{D(g)})$ using the notion of \emph{block-sensitivity}, a well-studied \cite{NS} Boolean function complexity measure. The block-sensitivity of $f$ at $z\in \{0,1\}^n$ is the maximum number of disjoint sets $S_1,\ldots,S_t$ such that for all $i\in [t]$, \[f(z) \neq f(z^{\oplus S_i}), \text{ where } z^{\oplus S_i}_j = \begin{cases} 1 - z_j, &\text{if } j\in S_i \text{ and}\\ z_j, &\text{ otherwise.}\end{cases} \]
        The block-sensitivity of $f$, denoted $\mathsf{bs}(f)$, is the maximum across all $z$, of the block-sensitivity of $f$ at $z$. By definition, $\mathsf{s}(f) \leq \mathsf{bs}(f)$. Moreover, the block-sensitivity is known to give a better upper bound for the decision-tree complexity than the sensitivity. In particular, Midrijānis \cite{Mid} showed that $\mathsf{DT}(f) \leq \mathsf{bs}(f)\cdot \mathsf{deg}(f)$. By the lower bound in \Cref{eq:deg-sens} we know that $\mathsf{deg}(f) \leq \mathsf{s}(f)^2 \leq \mathsf{bs}(f)^2$. Hence, $\mathsf{DT}(f) \leq \mathsf{bs}(f)^3$. 
        
        For any function $g$ with the following symmetry property, one can replace $\mathsf{s}(f)$ in \Cref{thm:main} with $\mathsf{bs}(f)$, which together with \Cref{cor} implies that $D(f\circ g) \geq \Omega(\mathsf{DT}(f)^{1/3}\cdot \sqrt{D(g)})$. The symmetry property in question is that for any $x$, there exists $\overline x\neq x$ such that $g(x,y) = 1 - g(\overline x,y)$, for every $y$. An example of such a function is the index function, $\mathsf{Ind}_m$. We give a sketch of this claim. Suppose $f$ has block-sensitivity $b$, achieved at a point $\tilde z$ by sets $S_1,\ldots,S_b$. Consider the function $f' :\{0,1\}^b\to\{0,1\}$ given by \[f'(z) = f(z'), \text{ where } z'_j = \begin{cases} \abs{\tilde z_j - z_j}, &\text{if } j\in S_1\cup\ldots\cup S_b \text{ and}\\ \tilde z_j, &\text{ otherwise.}\end{cases}\] 
        We note that $f'$ has sensitivity $b$, since $f'(0) = f(\tilde z)$, and $f'(0^{\oplus \{i\}}) = f(\tilde z^{\oplus S_i})$. Moreover, any protocol that computes $f\circ g$ can also be used to compute $f'\circ g$ in the following way. Suppose Alice and Bob gets inputs $x_1,\ldots,x_b$ and $y_1,\ldots,y_b$. For each set $S_i$ and coordinate $j\in S_i$, Alice sets $x_j = x_i$ if $\tilde z_j = 0$, and otherwise, sets $x_j = \overline x_i$ (from the symmetry property). Bob sets $y_j = y_i$ for each $j\in S_i$. For every coordinate $j\notin S_1\cup\ldots\cup S_b$, the players arbitrarily fix inputs such that $g(x_j,y_j) = \tilde z_j$. They can now run the protocol for $f\circ g$ to compute $f'\circ g$ and it follows that $D(f'\circ g) \leq D(f\circ g)$. However, \Cref{thm:main} shows that \[D(f'\circ g) \geq \mathsf{s}(f')
        \cdot \bigg(\frac{\Omega(D(g))}{\log\rk(g)} - \log\rk(g)\bigg) = \mathsf{bs}(f)
        \cdot \bigg(\frac{\Omega(D(g))}{\log\rk(g)} - \log\rk(g)\bigg).\]
        \item[2. ] Anup Rao observed that for certain other gadgets $g$, such as the inner product gadget $\mathsf{IP}_m$, one can improve \Cref{thm:main} to obtain $D(f\circ g) = \Omega(\mathsf{s}(f)\cdot D(g)/\log\rk_2(g))$, where $\rk_2(g)$ is the rank of $M_g$ over $\F_2$. This can be seen by modifying the proof of \Cref{thm:main} to keep track of $\rk_2(g)$ instead of $\rk(g)$. Each time we find a monochromatic rectangle $R$ for $g$ using \Cref{lem:dense-rectangle}, we can recurse on a sub-matrix where either $\rk_2(g)$ goes down by a factor of $4/5$ or the size of the matrix shrinks by the appropriate amount. If the rank over $\F_2$ is at most 5, one can just use 6 bits of communication to compute the function since $D(g) \leq \rk_2(g) + 1$. This calcuation yields
        \[D(f\circ g) = \mathsf{s}(f)
        \cdot \bigg(\frac{\Omega(D(g))}{\log\rk_2(g)} - \log\rk_2(g)\bigg).\]
        Furthermore, for any gadget satisfying $D(g) = \Omega(\log^2\rk(g))$, we get that $D(f\circ g) = \Omega(\mathsf{s}(f)
        \cdot D(g)/\log\rk_2(g))$. In particular, for the innner product gadget, we know that $\rk_2(\IP_m) \leq m$ and $D(\IP_m) = \Omega(m)$.
    \end{enumerate}

    \subparagraph*{Acknowledgments.} We are grateful to Anup Rao for several helpful discussions and his encouragement to work on the problem. We would like to thank Guangxu Yang for letting us know that \Cref{thm:xor-lemma} can be strengthened; the proof of \Cref{cor} uses his idea. We are also grateful to Paul Beame, Oscar Sprumont and Michael Whitmeyer for helpful conversations and feedback on a draft of this work.
\bibliographystyle{alpha}
\bibliography{ref}
\end{document}